\documentclass[11pt]{amsart}
\usepackage[margin=0.85in]{geometry}
\ifdefined\pdfglyphtounicode
\pdfglyphtounicode{parenleftbig}{0028}
\pdfglyphtounicode{parenrightbig}{0029}
\pdfglyphtounicode{parenleftBig}{0028}
\pdfglyphtounicode{parenrightBig}{0029}
\pdfglyphtounicode{parenleftbigg}{0028}
\pdfglyphtounicode{parenrightbigg}{0029}
\fi
\usepackage{amsmath,amssymb,amsthm,mathtools}
\usepackage{booktabs,array}
\usepackage{microtype,enumitem,aliascnt}
\usepackage[hidelinks]{hyperref}
\usepackage[nameinlink,noabbrev]{cleveref}
\hypersetup{pdftitle={The A3 Narain family over Z4},
 pdfauthor={Abdellah Sebbar and Rajae Sammani}}
\newtheorem{theorem}{Theorem}[section]
\newaliascnt{proposition}{theorem}
\newtheorem{proposition}[proposition]{Proposition}
\aliascntresetthe{proposition}
\newaliascnt{lemma}{theorem}
\newtheorem{lemma}[lemma]{Lemma}
\aliascntresetthe{lemma}
\newaliascnt{corollary}{theorem}
\newtheorem{corollary}[corollary]{Corollary}
\aliascntresetthe{corollary}
\theoremstyle{remark}
\newaliascnt{remark}{theorem}
\newtheorem{remark}[remark]{Remark}
\aliascntresetthe{remark}
\crefname{theorem}{Theorem}{Theorems}
\crefname{proposition}{Proposition}{Propositions}
\crefname{lemma}{Lemma}{Lemmas}
\crefname{corollary}{Corollary}{Corollaries}
\crefname{remark}{Remark}{Remarks}
\newcommand{\Z}{\mathbb Z}
\newcommand{\R}{\mathbb R}

\newcommand{\F}{\mathbb F}
\newcommand{\C}{\mathcal C}
\newcommand{\Lam}{\Lambda}
\newcommand{\HH}{\mathcal H}
\newcommand{\diag}{\operatorname{diag}}
\newcommand{\cA}{\mathsf A}

\begin{document}

\title{The \texorpdfstring{$A_3$}{A3} Narain family over \texorpdfstring{$\Z_4$}{Z4}}
\author{Rajae Sammani} \author{Abdellah Sebbar}

\keywords{Narain CFT, $A_3$ lattice, $\Z_4$ code, Construction A, theta series, T-duality }

\begin{abstract}
The code $\{000,220,202,022\}\subset\Z_4^3$ realizes the split $A_3$ root-weight gluing construction as ordinary Construction A, preserving the Lorentzian pairing and the positive energy form. For two hyperbolic pairs of $A_3$ blocks, the twenty-two even self-dual gluing codes form seven block-monomial classes. Explicit metric and $B$-field representatives and theta series distinguish exactly seven T-duality classes in this family at central charge six; the one-pair family has two classes at central charge three. For arbitrary $d$, the chiral root system on either side is $A_1^{3m}$, where $m$ counts the words $(2e_i,2e_i)$ in the gluing code. Exactly one two-pair class has no zero-$B$ T-duality frame. Intersections of gluing codes determine finite shift orbifolds between the backgrounds, with the target code specifying the twisted-sector projections.

\end{abstract}

\maketitle
\section{Introduction}

Code constructions provide finite descriptions of selected Narain conformal field theories \cite{AlamEtAl,DymarskyShapere}. In the root-weight construction, a code specifies the cosets of a root lattice that occur in the momentum lattice \cite{MizoguchiOikawa,SaidiSammani}. We consider the $A_3$ case, with its standard Euclidean form and discriminant group $A_3^*/A_3\simeq\Z_4$, where $\Z_4=\Z/4\Z$.

The code
\begin{equation}\label{eq:Edef}
 E=\{000,220,202,022\}\subset\Z_4^3
\end{equation}
is twice the binary even-weight code of length three. In the standard Euclidean normalization,
\[
 \tfrac12\{z\in\Z^3:z\bmod4\in E\}=D_3\simeq A_3,
 \qquad E^\perp/E=\langle g+E\rangle\simeq\Z_4,
 \quad g=(1,1,1).
\]
For an even self-dual gluing code $\C\subset\Z_4^{2d}$, define
\[
 \widehat\C=\bigcup_{(a_1,\ldots,a_{2d})\in\C}
 (a_1g+E)\times\cdots\times(a_{2d}g+E)\subset\Z_4^{6d}.
\]
Theorem~\ref{thm:main} identifies its ordinary Construction A lattice with the $A_3$ momentum lattice, preserving both forms.

For $d=1$ there are three labeled codes and two block-monomial classes.
For $d=2$, reduction modulo two gives
\[
\begin{array}{c|c|c}
\text{binary data}&\text{code type}&\text{number of codes}\\\hline
6\text{ totally singular planes}&\Z_4^2&12\\
9\text{ nonzero singular vectors}&\Z_4\oplus\Z_2^2&9\\
0&\Z_2^4&1
\end{array}
\]
The twenty-two codes form seven block-monomial orbits of sizes
$4,4,4,4,4,1,1$. Integral bases determine their Narain data, and distinct
theta series distinguish their T-duality classes.

The split pairing differs from the left-minus-right discriminant form
used for level-one $\mathrm{SU}(4)$ modular invariants in \cite{Angelinos};
see Remark~\ref{rem:WZW}. The two forms define different code
classification problems over $\Z_4$.

For arbitrary $d$, the chiral roots form orthogonal $A_1$ factors.
Among the two-pair classes, only $\mathrm M_2$ has no zero-$B$ T-duality
frame; the obstruction is the irreducible spanning $D_{12}$ root system
of its positive momentum lattice. The shift orbifold
$\mathrm T\to\mathrm M_2$ gauges $(-1)^{2\Delta}$, with the twisted
coset specified by $\mathrm M_2$.

\section{The fixed lattice and the gluing codes}\label{sec:fixed}

\subsection{The discriminant group of \texorpdfstring{$A_3$}{A3}}

We use the root lattice
\[
 R=A_3=\{x\in\Z^4:x_1+x_2+x_3+x_4=0\}
\]
with the usual Euclidean inner product. It has rank three, with simple-root
basis $\alpha_i=\varepsilon_i-\varepsilon_{i+1}$, $1\le i\le3$, where the
$\varepsilon_i$ are the standard coordinate vectors of $\R^4$. Roots have
squared norm two; this fixes the Euclidean normalization throughout.
The Gram matrix $K=(\alpha_i\cdot\alpha_j)$ and its inverse are
\begin{equation}\label{eq:K}
 K=\begin{pmatrix}2&-1&0\\-1&2&-1\\0&-1&2\end{pmatrix},
 \qquad K^{-1}=\frac14\begin{pmatrix}3&2&1\\2&4&2\\1&2&3\end{pmatrix}.
\end{equation}
The weight lattice is the Euclidean dual
\[
 W=R^*=\{w\in\operatorname{span}_{\R}R:w\cdot r\in\Z
                 \text{ for every }r\in R\}.
\]
In simple-root coordinates, $R=\Z^3$ and $W=K^{-1}\Z^3$; hence both have
rank three and $|W/R|=\det K=4$. The fundamental weight
$\omega_3=(1,1,1,-3)/4$ has squared norm $3/4$, with $4\omega_3\in R$
and $2\omega_3\notin R$. Thus $W/R=\langle\omega_3+R\rangle\simeq\Z_4$.

We use the three-coordinate realization \cite{ConwaySloane}
\[
 D_3=\{x\in\Z^3:x_1+x_2+x_3\equiv0\pmod2\}.
\]
This is a rank-three sublattice of index two in $\Z^3$. The vectors
\[
 v_1=(0,1,-1),\qquad v_2=(1,-1,0),\qquad v_3=(0,1,1)
\]
have Gram matrix $K$ and column determinant of absolute value two.
They generate a sublattice of $D_3$ of the same index in $\Z^3$, and
therefore form a basis of $D_3$. The map $\alpha_i\mapsto v_i$ is an
isometry $R\simeq D_3$; it identifies $W$ with $D_3^*$ and sends
$\omega_3$ to $(1,1,1)/2$.

Recall $E=\{000,220,202,022\}\subset\Z_4^3$ from \eqref{eq:Edef}.
It is obtained from the binary even-weight code $\{000,110,101,011\}$
by the additive embedding $b\mapsto2b$ from $\F_2$ to $\Z_4$.
Normalized ordinary Construction A over $\Z/M\Z$ takes the inverse image
of a code in $\Z^N$ and multiplies it by $1/\sqrt M$. Here $M=4$, so
the factor is $1/2$. For $E$, orthogonality is taken with respect to
the ordinary dot product modulo four. Since all its entries are even,
$E\subset E^\perp$.

\begin{lemma}\label{lem:inner}
For the code $E$ in \eqref{eq:Edef},
\begin{align}
 \frac12\{n\in\Z^3:n\bmod4\in E\}&=D_3,\\
 E^\perp&=\{z\in\Z_4^3:z_1\equiv z_2\equiv z_3\pmod2\}.
\end{align}
Moreover
\[
 \frac12\{n\in\Z^3:n\bmod4\in E^\perp\}=D_3^*,
 \qquad E^\perp/E=\langle g+E\rangle\simeq\Z_4,
 \quad g=(1,1,1).
\]
The dot product on the finite quotient is $3ab\pmod4$ on
$(ag+E,bg+E)$. The corresponding discriminant pairing on $D_3^*/D_3$
is $3ab/4\pmod\Z$.
\end{lemma}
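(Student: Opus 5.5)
The argument is a direct computation in each of the four assertions, carried out in the order stated.

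\emph{The lattice attached to $E$.} A vector $n\in\Z^3$ has $n\bmod4\in E$ exactly when every coordinate is even, say $n=2x$ with $x\in\Z^3$, and the reduction $2x\bmod4=2(x\bmod2)$ lies in $E$. Since $E=2\cdot\{000,110,101,011\}$, this second condition says that $x\bmod2$ has even weight, i.e.\ $x_1+x_2+x_3\equiv0\pmod2$. Hence $\tfrac12\{n:n\bmod4\in E\}=\{x\in\Z^3:\sum x_i\text{ even}\}=D_3$.

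\emph{The dual code and its lattice.} The code $E$ is generated by $220$ and $202$, so $z\in E^\perp$ if and only if $2z_1+2z_2\equiv0$ and $2z_1+2z_3\equiv0\pmod4$. Equivalently, $z_1\equiv z_2$ and $z_1\equiv z_3\pmod 2$, which gives the stated description of $E^\perp$. It follows that $|E^\perp|=4\cdot2\cdot2=16$. The lift $\tfrac12\{n:n\bmod4\in E^\perp\}$ is the set of vectors that are either all integers or all half-integers, namely $\Z^3\cup(\Z^3+\tfrac12(1,1,1))$. To identify this set with $D_3^*$, I would use the general principle that normalized Construction A over $\Z_4$ sends duals to duals: if $L=\tfrac12\pi^{-1}(C)$, then $y\in L^*$ exactly when $2y\cdot n\in4\Z$ for all lifts $n$ of $C$, which holds exactly when $2y$ is integral with reduction in $C^\perp$. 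Here $2y$ is integral because $4\Z^3\subset\pi^{-1}(C)$. Alternatively, one can check directly that $D_3^*=\Z^3\cup(\Z^3+\tfrac12 g)$, using the basis $v_i$. The direct check is cleaner to present, and I would include both as a consistency check.

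\emph{The quotient.} The quotient $E^\perp/E$ has order $16/4=4$. The element $g=(1,1,1)$ lies in $E^\perp$, and so does $2g=222$. But $222\notin E$, since $E$ consists of the vectors with an even number of $2$'s. Therefore $g$ has order $4$ in the quotient, and $E^\perp/E=\langle g+E\rangle\simeq\Z_4$. The form on the quotient is well defined because $E\subset E^\perp$ makes the pairing independent of the chosen representatives. Its value is $(ag)\cdot(bg)=3ab\pmod4$.

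\emph{The discriminant pairing.} Under the scaling by $\tfrac12$, the coset $ag+E$ corresponds to $\tfrac a2 g+D_3$, and the Euclidean product is $\tfrac{ab}{4}\,g\cdot g=\tfrac{3ab}{4}\pmod\Z$. This agrees with $\omega_3\cdot\omega_3=3/4$, using the isometry recorded before the lemma, which sends $\omega_3\mapsto g/2$.

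I do not expect a real obstacle. The only point needing care is the identification $\tfrac12\pi^{-1}(E^\perp)=D_3^*$: one must ensure that no vectors with mixed integer and half-integer coordinates survive, and that the dual of $D_3$ is taken inside $\R^3$. I would settle this with the explicit half-integral description together with an index count, $[D_3^*:D_3]=4=|E^\perp/E|$.
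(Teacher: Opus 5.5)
Your proposal is correct and follows the paper's proof essentially step for step. The only variation is in identifying $\tfrac12\{n\in\Z^3:n\bmod4\in E^\perp\}$ with $D_3^*$. You use the general fact that normalized Construction A carries $C^\perp$ to the dual lattice, or alternatively an index count against $[D_3^*:D_3]=4$. The paper instead checks directly that pairing with $e_i\pm e_j$ forces the coordinates of a dual vector to be all integral or all half-integral.
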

\begin{proof}
The words of $E$ are precisely the even triples whose coordinate sum
is divisible by four. Thus $n\bmod4\in E$ if and only if $n=2z$ with
$z\in D_3$, proving the first identity. Orthogonality to the generators
$(2,2,0)$ and $(2,0,2)$ gives
\[
 2(z_1+z_2)=2(z_1+z_3)=0\pmod4,
\]
so the three entries of a word in $E^\perp$ have the same parity.
Its integer inverse image, divided by two, is consequently
\[
 \Z^3\ \cup\ (\Z+\tfrac12)^3.
\]
To identify this union with $D_3^*$, let $e_i$ be the standard coordinate
vectors of $\R^3$. If $x$ pairs integrally with $D_3$, pairing with
$e_i-e_j$ and $e_i+e_j$ gives $x_i-x_j\in\Z$ and $2x_i\in\Z$.
Hence its coordinates are all integral or all half-integral. Conversely,
both kinds of vectors pair integrally with $D_3$, since its vectors
have even coordinate sum.

There are eight all-even and eight all-odd words in $E^\perp$, so
$|E^\perp/E|=16/4=4$. Since $g\in E^\perp$, $2g=(2,2,2)\notin E$
and $4g=0$, its class generates the quotient. For $e,f\in E$,
\[
 (ag+e)\cdot(bg+f)\equiv ab\,g\cdot g=3ab\pmod4.
\]
This also shows that the quotient pairing is independent of the
representatives. The map $n/2\mapsto(n\bmod4)+E$ identifies
$D_3^*/D_3$ with $E^\perp/E$. Dividing numerator vectors by two
divides their inner product by four, giving the discriminant pairing.
\end{proof}

We identify the label $a\in\Z_4$ with $ag+E$ and write
\begin{equation}\label{eq:cosets}
 W_a=\left\{\frac n2:n\in\Z^3,\ n\bmod4\in ag+E\right\}\subset D_3^*.
\end{equation}
Equivalently, $W_a=ag/2+D_3$. These are the four cosets of $R$ in $W$
under the fixed isometry, and $[x]=a$ means $x\in W_a$. In particular,
$W_0=D_3$ and $W_0\cup W_2=\Z^3$. A useful formula for the label is
\begin{equation}\label{eq:blocklabel}
 n\bmod4\in ag+E\quad\Longrightarrow\quad
 a\equiv3(n_1+n_2+n_3)\pmod4.
\end{equation}
Indeed the coordinate sum of a word of $E$ is zero modulo four, so
that of $ag+e$ is $3a$; multiplication by three recovers $a$.

\subsection{The split finite pairing}

For $d\ge1$, put $M_d=\Z_4^d\oplus\Z_4^d$ and write its elements as
$(a,b)$, where $a=(a_1,\ldots,a_d)$ and $b=(b_1,\ldots,b_d)$.
Use the forms
\begin{equation}\label{eq:splitfinite}
 \beta((a,b),(a',b'))=a\cdot b'+b\cdot a'\pmod4,
 \qquad q_d(a,b)=a\cdot b\pmod4.
\end{equation}
Here $a\cdot b=\sum_i a_ib_i$. The polarization identity is
\[
 q_d(u+v)-q_d(u)-q_d(v)=\beta(u,v).
\]
A \emph{gluing code} is an additive subgroup $\C\subset M_d$, with
orthogonal complement
\[
 \C^\perp=\{z\in M_d:\beta(c,z)=0\text{ for every }c\in\C\}.
\]
This uses the split pairing $\beta$, unlike the ordinary dot product
used for $E$. We call $\C$ \emph{even self-dual} when
\begin{equation}\label{eq:evenfinite}
 q_d|_\C=0,\qquad \C=\C^\perp.
\end{equation}
Evenness implies $\C\subseteq\C^\perp$ by polarization. It is stronger
than vanishing of self-pairings, since $\beta(c,c)=2q_d(c)$ and two
is not invertible in $\Z_4$.

The pairing is nondegenerate: if $u_i$ denotes the $i$th coordinate
vector of $\Z_4^d$, then
\[
 \beta((a,b),(u_i,0))=b_i,\qquad
 \beta((a,b),(0,u_i))=a_i.
\]
Thus pairing with coordinate vectors recovers every coordinate.
Character orthogonality gives
\[
 |\C|\,|\C^\perp|
 =\sum_{z\in M_d}\sum_{c\in\C}e^{2\pi i\beta(c,z)/4}
 =4^{2d}.
\]
For fixed $z$, the inner sum is $|\C|$ if $z\in\C^\perp$ and zero
otherwise, being the sum of a nontrivial character of $\C$.
Reversing the sums, nondegeneracy makes every term except $c=0$ vanish.
Consequently a self-dual code has order $4^d$; conversely, an even
code of that order is self-dual.

The coordinates $a_i,b_i$ form a hyperbolic pair: their coordinate
vectors have self-pairing zero and mutual pairing one. The code
equivalence used below is generated by permutations of these $d$ pairs,
independent exchanges $a_i\leftrightarrow b_i$, and simultaneous sign
changes $(a_i,b_i)\mapsto(-a_i,-b_i)$ within any chosen pair.
Each operation preserves $\beta$ and $q_d$. We call the resulting
equivalence \emph{block-monomial equivalence}; a labeled code is a
subgroup before taking this equivalence relation.

\begin{remark}\label{rem:WZW}
The finite pairing for level-one $\mathrm{SU}(4)$ modular invariants is
\[
 3(a\cdot a'-b\cdot b')\pmod4,
\]
rather than the split pairing \eqref{eq:splitfinite}; see
\cite[Sections~2.2 and~3.4]{Angelinos}. These forms are not isometric over
$\Z_4$. Every self-pairing for the split form is even, whereas a unit
coordinate vector has odd self-pairing for the difference form. This
property is preserved by every invertible change of coordinates.
Thus the $N=4$, $n=2$ enumeration in that reference and the split-code
classification below concern different finite quadratic modules.
\end{remark}

\subsection{The Narain background}

For an even self-dual gluing code $\C\subset M_d$, define
\begin{equation}\label{eq:rootlift}
 \Lam_{A_3}(\C)=\{(x,y)\in W^d\oplus W^d:
 ([x_1],\ldots,[x_d],[y_1],\ldots,[y_d])\in\C\}.
\end{equation}
Each $x_i,y_i$ is a vector in the rank-three lattice $W$, while its
bracket is a label in $W/R\simeq\Z_4$. If $\pi$ denotes this label map,
then $\Lam_{A_3}(\C)=\pi^{-1}(\C)$ and
\[
 R^d\oplus R^d\subseteq\Lam_{A_3}(\C)\subseteq W^d\oplus W^d,
 \qquad \Lam_{A_3}(\C)/(R^d\oplus R^d)\simeq\C.
\]
Thus $\C$ is a finite subgroup specifying the allowed cosets, whereas
$\Lam_{A_3}(\C)$ is a lattice of rank $6d$. For example, when $d=1$,
a codeword $(1,0)$ contributes the coset $W_1\times W_0$.
In numerator coordinates its residues form $(g+E)\times E$, a set of
$4\cdot4=16$ words of length six. This is the symbol replacement used
in Section~\ref{sec:ordinary}.

On the ambient real space $\R^{3d}\oplus\R^{3d}$ use
\begin{equation}\label{eq:forms}
 \eta((x,y),(x',y'))=x\cdot y'+y\cdot x',
 \qquad \HH((x,y),(x',y'))=x\cdot x'+y\cdot y'.
\end{equation}
The dot products sum the fixed Euclidean products in the $A_3$ blocks.
The form $\eta$ has signature $(3d,3d)$, while $\HH$ is positive definite.
For vectors $z,z'$ with labels $c,c'$, Lemma~\ref{lem:inner} gives
\[
 \eta(z,z')\equiv\tfrac34\beta(c,c')\pmod\Z,
 \qquad \tfrac12\eta(z,z)\equiv\tfrac34q_d(c)\pmod\Z.
\]
Hence $\Lam_{A_3}(\C)$ is integral and even for $\eta$.
Its $\eta$-dual lies in $W^d\oplus W^d$ because it must pair integrally
with $R^d\oplus R^d$; the first congruence then shows that this dual is
$\pi^{-1}(\C^\perp)=\Lam_{A_3}(\C)$. It is therefore an even unimodular
lattice for $\eta$.

Put $n=3d$. The left and right momenta are
\[
 p_L=\frac{x+y}{\sqrt2},\qquad p_R=\frac{x-y}{\sqrt2}.
\]
Allowed momentum pairs are those obtained from
$(x,y)\in\Lam_{A_3}(\C)$. The associated Narain theory is a conformal
field theory of $n$ compact free bosons, with $c=\bar c=n$
\cite{AlamEtAl,DymarskyShapere}. For a momentum state without oscillator
excitations, its conformal weights, scaling dimension, and spin are
\[
 h=\frac{\|p_L\|^2}{2},\qquad \bar h=\frac{\|p_R\|^2}{2},
 \qquad \Delta=h+\bar h=\frac{\HH(z,z)}2,
 \qquad s=h-\bar h=\frac{\eta(z,z)}2.
\]
Thus $\eta$ determines the spin, which is integral by evenness, and
$\HH$ determines the scaling dimension. On a circle of circumference
$L$, in units with $\hbar$ and propagation speed one, the energy above
the vacuum is $2\pi\Delta/L$. This explains the term \emph{positive
energy form}: scaling dimension equals energy above the vacuum only
after fixing $L=2\pi$. Oscillator levels $N_L,N_R\ge0$ add $N_L,N_R$
to the two conformal weights.

A \emph{Narain background} here consists of the lattice together with
$\eta$ and $\HH$. An equivalence of backgrounds is a lattice isomorphism
preserving both forms. Since $\eta$ and $\HH$ become the difference and
sum of the left and right Euclidean pairings, this is equivalent to
identifying the momentum lattices by independent orthogonal
transformations of $p_L$ and $p_R$.

In an integral null basis, whose vectors $u_i,v_i$ satisfy
$\eta(u_i,u_j)=\eta(v_i,v_j)=0$ and $\eta(u_i,v_j)=\delta_{ij}$,
the integral duality group is
\[
 O(n,n;\Z)=\{U\in GL_{2n}(\Z):U^T\eta_0U=\eta_0\},
 \qquad \eta_0=\begin{pmatrix}0&I_n\\I_n&0\end{pmatrix}.
\]
Here integrality refers to coordinates in the lattice basis.
If $\mathsf H_1,\mathsf H_2$ are the positive Gram matrices of two
backgrounds in such bases, equivalence also requires
$U^T\mathsf H_2U=\mathsf H_1$. Explicit bases are given in
Section~\ref{sec:GB}. Every block-monomial code operation induces a
background equivalence, but general Narain equivalences may mix the
$A_3$ blocks. The spectral comparisons below decide whether distinct
code classes admit further identifications.

\section{An explicit ordinary \texorpdfstring{$\Z_4$}{Z4} realization}\label{sec:ordinary}

Set $E_d=E^{\oplus2d}$ and define
\begin{equation}\label{eq:rho}
 \rho:(E^\perp)^{2d}\longrightarrow\Z_4^{2d},
 \qquad (a_1g+e_1,\ldots,a_{2d}g+e_{2d})\longmapsto(a_1,\ldots,a_{2d}).
\end{equation}
By Lemma~\ref{lem:inner}, this is a surjective homomorphism with kernel $E^{\oplus2d}$. Each block has a unique discriminant label. For a gluing code $\C$, put
\begin{equation}\label{eq:Chat}
 \widehat\C=\rho^{-1}(\C)
 =\{(a_1g+e_1,\ldots,a_{2d}g+e_{2d}):a\in\C,\ e_i\in E\}.
\end{equation}
Writing a word as $(u_1,\ldots,u_d,v_1,\ldots,v_d)$ with
$u_i,v_i\in\Z_4^3$, the ordinary split pairing is
$\sum_i(u_i\cdot v'_i+v_i\cdot u'_i)\pmod4$. Thus the $i$th
length-three block in the first half is paired with the $i$th block
in the second half.

\begin{theorem}\label{thm:main}
Let $\C\subset\Z_4^{2d}$ be even self-dual for \eqref{eq:splitfinite}. Then $\widehat\C\subset\Z_4^{6d}$ is even self-dual for the ordinary split pairing, and
\begin{equation}\label{eq:latticeiso}
 \Lam_{A_3}(\C)\simeq\Lam_{\cA}(\widehat\C),
 \qquad
 \Lam_{\cA}(\widehat\C)=\frac12\{z\in\Z^{6d}:z\bmod4\in\widehat\C\}.
\end{equation}
After the fixed blockwise isometry $A_3\simeq D_3$, the two sides are the same subset of the real momentum space. This identification preserves both $\eta$ and $\HH$. In particular, the two constructions give the same Narain CFT.
\end{theorem}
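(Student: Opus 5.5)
The plan is to establish the set-theoretic identification in \eqref{eq:latticeiso} first, and then deduce self-duality of $\widehat\C$ from the lattice statement, which already holds by the results of Section~\ref{sec:fixed}. Write a vector of $\R^{3d}\oplus\R^{3d}$ as $(x_1,\ldots,x_d,y_1,\ldots,y_d)$ with $x_i,y_i\in\R^3$, after transporting $W$ to $D_3^*$ blockwise via the fixed isometry $\alpha_i\mapsto v_i$. Take $z=2(x,y)\in\Z^{6d}$ whenever it is integral. By Lemma~\ref{lem:inner} and \eqref{eq:cosets}, a vector $x_i\in\R^3$ lies in $D_3^*$ with label $a$ exactly when $2x_i\in\Z^3$ and $2x_i\bmod4\in ag+E$. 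It follows that $(x,y)\in W^d\oplus W^d$ if and only if $2(x,y)\bmod4\in(E^\perp)^{2d}$. In that case the label vector $\pi(x,y)$ equals $\rho(2(x,y)\bmod4)$. Hence $\pi(x,y)\in\C$ if and only if $2(x,y)\bmod 4\in\rho^{-1}(\C)=\widehat\C$, which gives
\[
\Lam_{A_3}(\C)=\tfrac12\{z\in\Z^{6d}: z\bmod4\in\widehat\C\}=\Lam_{\cA}(\widehat\C)
\]
as subsets of the same real space. The forms $\eta$ and $\HH$ in \eqref{eq:forms} are defined on the ambient space, and the blockwise isometry is Euclidean in each block, so both forms are preserved automatically.

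For the code statement, I will check evenness and self-duality of $\widehat\C$ directly, using the mod-$4$ dictionary between numerators and lattice vectors. Take $u=\rho^{-1}$-words $w,w'$ with labels $c,c'\in\C$. Blockwise, Lemma~\ref{lem:inner} gives $(a g+e)\cdot(b g+f)\equiv 3ab\pmod 4$. Summing over the split pairing yields
\[
\text{split}(w,w')\equiv3\beta(c,c')\pmod4 .
\]
For evenness, I will use the appropriate notion for the ordinary split form, namely $\sum_i u_i\cdot v_i\equiv0\pmod4$. This is computed blockwise as $\sum_i 3a_ib_i=3q_d(c)$, where $a_i=c_i$ and $b_i=c_{d+i}$. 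This vanishes because $q_d|_\C=0$.

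The count is $|\widehat\C|=|\C|\,|E|^{2d}=4^d\cdot4^{2d}=4^{3d}$, which is the self-dual order for length $6d$. By the same character-orthogonality argument as for $\beta$, using nondegeneracy of the ordinary split pairing, an even code of this order is self-dual. Alternatively, self-duality follows from $\Lam_{\cA}(\widehat\C)$ being unimodular. That lattice is $\Lam_{A_3}(\C)$, which Section~\ref{sec:fixed} showed to be even unimodular for $\eta$. Then Construction A gives $\Lam_{\cA}(\widehat\C)^*=\Lam_{\cA}(\widehat\C^\perp)$, where the dual of $\tfrac12 L$ under $\eta$ is $\tfrac12$ of the dual of $L$ with a factor of $4$.

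The main obstacle I expect is pinning down the correct evenness convention. A word $w$ in the first half has numerator entries $u_i\in ag+E$, and one must check that the ordinary split self-pairing $2\sum u_i\cdot v_i$ is $0\bmod 8$. Equivalently, $\eta(z/2,z/2)/2=\tfrac14\sum u_i\cdot v_i$ must be an integer, and this requires the integer lift and not merely the residue mod $4$. I would handle this through the lattice, using the congruence $\tfrac12\eta(z,z)\equiv\tfrac34 q_d(c)\pmod\Z$ stated earlier. That congruence gives integrality, and it shows that the code-level condition is exactly $\sum u_i\cdot v_i\equiv0\pmod4$, which is independent of the lift since changing a lift by $4$ changes $u_i\cdot v_i$ by a multiple of $4$.
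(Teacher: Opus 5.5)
Your proposal is correct and follows essentially the paper's argument. It uses the same ingredients: the blockwise congruences $\sum_i(u_i\cdot v_i'+v_i\cdot u_i')\equiv3\beta(c,c')$ and $\sum_i u_i\cdot v_i\equiv3q_d(c)\pmod4$, the count $|\widehat\C|=4^{3d}$ combined with nondegeneracy of the ordinary split pairing, and the identification of labels as $\rho(z\bmod4)$.

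You differ in only two respects, and both are fine. First, you reverse the order, doing the set identification before the code computation. Second, you do not redo the paper's index computation showing $\Lam_{\cA}(\widehat\C)$ is even unimodular. You instead inherit this from the even unimodularity of $\Lam_{A_3}(\C)$ established in Section~\ref{sec:fixed}, which is equally valid once the two sets and their forms coincide.
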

\begin{proof}
Write a word of $(E^\perp)^{2d}$ as $z=(u_1,\ldots,u_d,v_1,\ldots,v_d)$, where $u_i=a_i g+e_i$ and $v_i=b_i g+f_i$. Lemma~\ref{lem:inner} gives, for two such words with labels $(a,b)$ and $(a',b')$,
\begin{align*}
 \sum_i(u_i\cdot v_i'+v_i\cdot u_i')
   &\equiv3\beta((a,b),(a',b'))\pmod4,\\
 \sum_i u_i\cdot v_i&\equiv3q_d(a,b)\pmod4.
\end{align*}
Since three is a unit modulo four, the evenness and isotropy of $\C$ imply the corresponding conditions for $\widehat\C$. The kernel of $\rho$ is $E^{\oplus2d}$, so
\[
 |\widehat\C|=|E|^{2d}|\C|=4^{2d}4^d=4^{3d}.
\]
The ordinary split pairing on $\Z_4^{6d}$ is nondegenerate. Consequently $|\widehat\C^\perp|=4^{6d}/|\widehat\C|=4^{3d}$. Isotropy gives $\widehat\C\subset\widehat\C^\perp$, and equality of orders proves self-duality.

Write a vector of $(D_3^*)^{2d}$ as $z/2$, where $z\in\Z^{6d}$ and each length-three block reduces into $E^\perp$. Its labels modulo $D_3^{2d}$ are $\rho(z\bmod4)$. It belongs to \eqref{eq:rootlift} exactly when these labels lie in $\C$, or equivalently when $z\bmod4\in\widehat\C$. The momentum sets in \eqref{eq:latticeiso} are therefore equal in the $D_3$ coordinates. The fixed blockwise Euclidean isometry preserves both sums in \eqref{eq:forms}, and hence the left and right squared momenta separately.

For the lattice conditions, write the ordinary coordinates as $(U,V)/2$, with $U,V\in\Z^{3d}$. The finite conditions just proved imply
\[
 \eta\bigl((U,V)/2,(U',V')/2\bigr)
   =\frac{U\cdot V'+V\cdot U'}4\in\Z,
 \qquad
 \eta\bigl((U,V)/2,(U,V)/2\bigr)
   =\frac{U\cdot V}{2}\in2\Z.
\]
Thus the ordinary lattice is integral and even. It contains $2\Z^{6d}$ with index $|\widehat\C|=4^{3d}$. The absolute determinant of $2\Z^{6d}$ for $\eta$ is $4^{6d}$, so the determinant of the ordinary lattice has absolute value $4^{6d}/(4^{3d})^2=1$. It is therefore even unimodular, with the fixed positive form used in Narain Construction A \cite[Section~4.1]{AlamEtAl}.
\end{proof}

\begin{corollary}\label{cor:nonew}
The ordinary codes arising from this $A_3$ construction are exactly the even self-dual split codes $\mathcal B$ with
\[
 E^{\oplus2d}\subset\mathcal B\subset(E^\perp)^{\oplus2d}.
\]
Their gluing codes are recovered by $\C=\rho(\mathcal B)$.
\end{corollary}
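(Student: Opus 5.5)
The corollary asserts two inclusions between sets of codes, and I would prove each separately, using Theorem~\ref{thm:main} for one direction and a pullback argument under $\rho$ for the other.

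\emph{Codes from the construction satisfy the conditions.} Suppose $\mathcal B=\widehat\C=\rho^{-1}(\C)$ for some even self-dual gluing code $\C\subset M_d$. Theorem~\ref{thm:main} already shows that $\mathcal B$ is even self-dual for the ordinary split pairing. Because $\rho$ is defined on $(E^\perp)^{2d}$ and has kernel $E^{\oplus2d}$, the preimage automatically satisfies
\[
 E^{\oplus2d}\subset\mathcal B\subset(E^\perp)^{\oplus2d}.
\]
Surjectivity of $\rho$ then gives $\rho(\mathcal B)=\C$, which is the recovery statement.

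\emph{Every such code arises from the construction.} Let $\mathcal B$ be even self-dual for the ordinary split pairing, with $E^{\oplus2d}\subset\mathcal B\subset(E^\perp)^{\oplus2d}$. Set $\C=\rho(\mathcal B)$.
\begin{enumerate}[label=(\roman*)]
\item Since $\mathcal B$ contains $\ker\rho=E^{\oplus2d}$, we have $\mathcal B=\rho^{-1}(\C)=\widehat\C$.
\item The congruences in the proof of Theorem~\ref{thm:main} state that, for words of $(E^\perp)^{2d}$, the ordinary pairing equals $3\beta$ of the labels and the ordinary quadratic value equals $3q_d$ of the labels. Because $3$ is a unit modulo $4$, evenness and isotropy of $\mathcal B$ transfer to $\C$: $q_d|_\C=0$, and hence $\C\subset\C^\perp$ by polarization.
\item For the order, self-duality of $\mathcal B$ under a nondegenerate pairing on $\Z_4^{6d}$ gives $|\mathcal B|=4^{3d}$. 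Then
\[
 |\C|=|\mathcal B|/|E|^{2d}=4^{3d}/4^{2d}=4^d.
\]
\item Section~\ref{sec:fixed} showed that an even code of order $4^d$ in $M_d$ is self-dual. So $\C$ is an even self-dual gluing code and $\mathcal B=\widehat\C$ comes from the construction.
\end{enumerate}

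The only step needing any care is (ii), the transfer of evenness from $\mathcal B$ to $\C$. The congruence $\sum_i u_i\cdot v_i\equiv 3q_d(a,b)\pmod4$ holds for every word of $(E^\perp)^{2d}$, whatever representatives $e_i,f_i\in E$ are chosen, so Lemma~\ref{lem:inner} covers it. The order count in step (iii) uses only nondegeneracy and the character argument already given in Section~\ref{sec:fixed}.
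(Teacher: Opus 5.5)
Your proposal is correct and follows essentially the same route as the paper. The forward direction uses Theorem~\ref{thm:main} with $\ker\rho=E^{\oplus2d}$. The converse uses $\mathcal B=\rho^{-1}(\rho(\mathcal B))$, then the two congruences with the unit $3$ to transfer evenness and isotropy, then the count $|\C|=4^{3d}/4^{2d}=4^d$, and finally nondegeneracy of $\beta$ to conclude $\C=\C^\perp$.
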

\begin{proof}
If $\mathcal B$ contains $\ker\rho=E^{\oplus2d}$, the subgroup correspondence gives $\mathcal B=\rho^{-1}(\rho(\mathcal B))$. Put $\C=\rho(\mathcal B)$. The two pairing congruences in the proof of Theorem~\ref{thm:main} show that $q_d$ vanishes on $\C$ and that $\C\subset\C^\perp$, because multiplication by three is invertible modulo four. Moreover self-duality of $\mathcal B$ gives
\[
 |\C|=\frac{|\mathcal B|}{|E|^{2d}}
      =\frac{4^{3d}}{4^{2d}}=4^d.
\]
Nondegeneracy of $\beta$ implies $|\C^\perp|=4^{2d}/|\C|=4^d$, so $\C=\C^\perp$. Conversely Theorem~\ref{thm:main} constructs such an ordinary code from every even self-dual $\C$, and its inverse image contains the kernel and lies in the domain of $\rho$.
\end{proof}

For each generator $v=(a_1,\ldots,a_{2d})$ of $\C$, take its lift $(a_1g,\ldots,a_{2d}g)$ and adjoin the two generators $(2,2,0)$ and $(2,0,2)$ in each block. These vectors generate $\widehat\C$, since their images generate $\C$ and the block generators span $\ker\rho$.

\section{The gluing codes for one and two pairs}\label{sec:classification}

The classification depends only on $(M_d,\beta,q_d)$; the dependence
of the resulting backgrounds on the base lattice is discussed in
Remark~\ref{rem:base}.

\subsection{One hyperbolic pair}
There are exactly three even self-dual subgroups of $\Z_4^2$:
\[
 \C_x=\{(a,0):a\in\Z_4\},\qquad
 \C_y=\{(0,b):b\in\Z_4\},\qquad
 \C_2=2\Z_4\oplus2\Z_4.
\]
Indeed each has order four. A cyclic subgroup of order four has a generator with an odd entry. The condition $ab=0$ then forces the other entry to vanish, giving $\C_x$ or $\C_y$. A noncyclic subgroup of order four is the full two-torsion subgroup, which is even self-dual. Exchanging the two coordinates identifies $\C_x$ with $\C_y$, leaving two block-monomial classes, denoted $\mathrm F$ and $\mathrm T$.

\subsection{Two hyperbolic pairs}\label{sec:d2class}

For two pairs, order the coordinates of $\mathbb Z_4^4$ as
\[
(a_1,a_2,b_1,b_2).
\]
Set $J=\left(\begin{smallmatrix}0&1\\-1&0\end{smallmatrix}\right)$.

Let
\[
V=M_2/2M_2\simeq\mathbb F_2^4,
\qquad
\bar q(a,b)=a\cdot b,
\]
with polar form
\[
\bar\beta((a,b),(a',b'))=a\cdot b'+b\cdot a'.
\]
This is the four-dimensional split quadratic space over $\mathbb F_2$.

\begin{lemma}\label{lem:F2geometry}
The quadratic space $(V,\bar q)$ has nine nonzero singular vectors and six maximal totally singular planes.  Under the reduction modulo $2$ of the block-monomial group:
\begin{enumerate}[label=\textup{(\roman*)}]
\item the six singular planes split into an orbit of four coordinate planes and an orbit of two diagonal planes;
\item the nine nonzero singular vectors split into three orbits, represented by
\[
0001,\qquad0011,\qquad1111,
\]
of sizes $4,4,1$, respectively.
\end{enumerate}
\end{lemma}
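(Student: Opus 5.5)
Everything here is finite, so the plan is to compute directly in coordinates $(a_1,a_2,b_1,b_2)\in\F_2^4$ with $\bar q=a_1b_1+a_2b_2$.

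\emph{Singular vectors.} First count them. For each pair $(a_i,b_i)\in\F_2^2$, the product $a_ib_i$ is $0$ for three values and $1$ for one value. Hence $\bar q=0$ exactly when both products vanish or both equal $1$. This gives $3\cdot3+1=10$ singular vectors, so there are nine nonzero ones: the eight nonzero vectors with $a_1b_1=a_2b_2=0$, together with $1111$.

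\emph{Maximal totally singular planes.} Since $\bar\beta$ is nondegenerate, Witt's theorem bounds totally singular subspaces by dimension $2$. A plane spanned by singular $u,v$ is totally singular iff $\bar\beta(u,v)=0$. I will list the nonzero singular vectors and check which pairs are orthogonal. The expected answer is the four coordinate planes $\langle a_1,a_2\rangle$, $\langle a_1,b_2\rangle$, $\langle b_1,a_2\rangle$, $\langle b_1,b_2\rangle$, where $a_i,b_i$ here denote the unit vectors. There are also two diagonal planes, $\{0,1001,0110,1111\}$ and $\{0,1010,0101,1111\}$, each containing $1111$. These six should exhaust the list: by direct enumeration, or by the count $(2^0+1)(2^1+1)=6$ for $O^+_4(2)$, or by noting that each singular point lies on exactly two planes ($9\cdot2/3=6$). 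One might first guess $1001=(a_1{=}1,b_2{=}1)$ lies in a coordinate plane. But $a_1$ and $b_2$ belong to different hyperbolic pairs, so $\langle a_1,b_2\rangle$ is itself a coordinate plane containing $1001$. So the diagonal planes need a careful recheck here, and I will verify them from the listed vectors rather than from intuition.

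\emph{Orbits.} Next reduce the generators of the block-monomial group modulo $2$. The sign changes become trivial, leaving swap of the two pairs and independent exchanges $a_i\leftrightarrow b_i$, a group of order $8$ acting on $\F_2^4$.
\begin{itemize}[label=--]
\item The vector $1111$ is fixed, giving an orbit of size $1$.
\item The four vectors of weight one form one orbit, represented by $0001$, of size $4$.
\item The remaining four nonzero singular vectors are those with one nonzero coordinate in each pair and $a_ib_i=0$. They form one orbit of size $4$, represented by $0011$ in the ordering $(a_1,a_2,b_1,b_2)$.
\end{itemize}
For the planes, the exchanges act transitively on the four choices of one coordinate from each pair, so the coordinate planes form one orbit. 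The two diagonal planes are the only planes through $1111$, so they form a union of orbits; I will check that a single exchange $a_1\leftrightarrow b_1$ swaps them.

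\emph{Main obstacle.} The only real risk is bookkeeping: keeping straight which coordinates form hyperbolic pairs in the ordering $(a_1,a_2,b_1,b_2)$, and correctly identifying the diagonal planes versus the coordinate planes under that ordering. I will do the final verification by an explicit list of the nine vectors and their pairwise $\bar\beta$ values.
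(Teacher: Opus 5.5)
There is a concrete error in your list of diagonal planes, and it is the ordering slip you anticipated. In the paper's ordering $(a_1,a_2,b_1,b_2)$ the vectors $1010=a_1+b_1$ and $0101=a_2+b_2$ have $\bar q=1$, so $\{0,1010,0101,1111\}$ is not totally singular. Your two diagonal planes are correct only in the interleaved ordering $(a_1,b_1,a_2,b_2)$, while the rest of your proposal, for instance the representative $0011$, uses $(a_1,a_2,b_1,b_2)$. In the paper's ordering the second diagonal plane is $\{0,1100,0011,1111\}=\langle a_1+a_2,\,b_1+b_2\rangle$, matching the paper's $\langle1100,0011\rangle$.

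A quick way to pin down both diagonal planes: $\bar\beta(x,1111)=a_1+a_2+b_1+b_2$. So a vector completing $1111$ to a singular plane must be an even-weight singular vector, i.e.\ one of $1100,0011,1001,0110$, and these pair off as $x,\ x+1111$.

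As written, your planned check would also fail. Swapping the first and third coordinates ($a_1\leftrightarrow b_1$) fixes $\{0,1010,0101,1111\}$ and sends $\{0,1001,0110,1111\}$ to $\{0,0011,1100,1111\}$, which is not on your list. With the corrected plane, the same exchange does interchange the two diagonal planes, via $1100\mapsto0110$ and $0011\mapsto1001$.

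Your remark about $1001$ describes no conflict. The vector $1001$ lies on the coordinate plane $\langle a_1,b_2\rangle$ and on the diagonal plane $\{0,1001,0110,1111\}$, as it should: every nonzero singular vector lies on exactly two of the six planes. Your list fails this test, since $1100$ and $0011$ each lie on only one listed plane, and checking it would have exposed the mistake.

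With that correction the argument is complete, and it takes a somewhat different route from the paper. The paper writes the vector as a $2\times2$ matrix whose determinant is $\bar q$. The nine singular vectors are then the rank-one matrices $uv^T$, and the six planes are the three with a fixed column line and the three with a fixed row line. The plane count and the two-planes-per-point property therefore come for free.

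Your product count $3\cdot3+1$ is quicker for the vectors. For the planes, however, you must actually carry out the enumeration, or prove the two-planes-per-point claim, or cite the $O^+_4(2)$ count. To separate the plane orbits, your invariant (the fixed vector $1111$ lies only on diagonal planes) works as well as the paper's (only coordinate planes contain a weight-one vector). Your vector-orbit analysis matches the paper's.
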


\begin{proof}
Identify $(a_1,a_2,b_1,b_2)$ with
\[
 \begin{pmatrix}a_1&a_2\\b_2&b_1\end{pmatrix}\in M_2(\F_2).
\]
Its determinant is $\bar q(a,b)$. A nonzero singular matrix has rank one, so it has the form $uv^T$ with $u,v\in\F_2^2\setminus\{0\}$. This expression is unique, since the only nonzero scalar in $\F_2$ is one. There are therefore $3\cdot3=9$ nonzero singular vectors.

Consider independent rank-one matrices $uv^T$ and $xy^T$. If $u,x$ and $v,y$ are both independent, then
\[
 uv^T+xy^T=(u\ \ x)\begin{pmatrix}v^T\\y^T\end{pmatrix}
\]
is invertible. Otherwise their sum is nonzero of rank one and has the same column line or the same row line as both matrices. Thus a two-dimensional totally singular subspace consists either of the matrices with a fixed column line or of those with a fixed row line. Each choice gives a plane; there are three of each. These planes are maximal: a totally singular subspace $U$ satisfies $U\subset U^{\perp_{\bar\beta}}$, and nondegeneracy of $\bar\beta$ gives $\dim U\le2$.

Four of these planes are obtained by choosing one coordinate line from each hyperbolic pair. The remaining two are
\[
 \langle1001,0110\rangle,\qquad\langle1100,0011\rangle.
\]
Independent exchanges in the two pairs are transitive on the four coordinate planes. Exchanging $a_2$ and $b_2$ interchanges the two displayed planes. The two orbits cannot merge, since a coordinate plane contains a vector with exactly one nonzero coordinate, whereas a diagonal plane does not.

For a singular vector, the two products $a_1b_1$ and $a_2b_2$ must agree. If both vanish, each pair has at most one nonzero entry. There are four choices with only one pair nonzero and four with both pairs nonzero. The exchanges and the pair permutation act transitively on each set. If both products equal one, the vector is $1111$, which is fixed. This gives the asserted vector orbits and representatives.
\end{proof}

A subgroup of $\mathbb Z_4^4$ has exponent dividing four. At order sixteen, its possible abstract group types are
\[
\mathbb Z_4^2,
\qquad
\mathbb Z_4\oplus\mathbb Z_2^2,
\qquad
\mathbb Z_2^4.
\]
The reduction modulo two distinguishes these three types.

\begin{proposition}\label{prop:d2count}
There are exactly
\[
12\text{ codes of type }\mathbb Z_4^2,
\qquad
9\text{ codes of type }\mathbb Z_4\oplus\mathbb Z_2^2,
\qquad
1\text{ code of type }\mathbb Z_2^4.
\]
\end{proposition}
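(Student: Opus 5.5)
The plan is to analyze an even self-dual code $\C\subset M_2$ through its reduction $\bar\C\subset V=M_2/2M_2$ and its two-torsion part $\C\cap 2M_2$. First, every $c\in\C$ satisfies $q_2(c)\equiv0\pmod4$, so $\bar q(\bar c)=0$. Also $\beta(c,c')\equiv0$ reduces to $\bar\beta(\bar c,\bar c')=0$. Hence $U:=\bar\C$ is totally singular. By Lemma~\ref{lem:F2geometry} and nondegeneracy of $\bar\beta$, $k=\dim U\in\{0,1,2\}$.

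Next I would identify the kernel using self-duality. For $x\in M_2$ we have $q_2(2x)=4q_2(x)=0$, and $\beta(2x,c)=2\beta(x,c)$. Since $\C=\C^\perp$, this gives $2x\in\C$ if and only if $\bar x\in U^{\perp_{\bar\beta}}$. Thus $\C\cap2M_2=2\pi^{-1}(U^\perp)$, where $\pi$ is reduction mod two, and this kernel has order $2^{4-k}$. The group structure then follows from $k$:
\begin{enumerate}[label=\textup{(\alph*)}]
\item for $k=2$, $2\C$ has order $4$ and $\C\simeq\Z_4^2$;
\item for $k=1$, $\C\simeq\Z_4\oplus\Z_2^2$;
\item for $k=0$, $\C=2M_2\simeq\Z_2^4$, which gives exactly one code.
\end{enumerate}
Conversely, any subgroup $\C$ with $\bar\C=U$ totally singular, $\C\supset 2\pi^{-1}(U^\perp)$, and even on generators has order $2^k\cdot2^{4-k}=16=4^2$. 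By the order criterion of Section~\ref{sec:fixed}, it is self-dual. So it suffices to count even lifts for each singular $U$.

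To count lifts, fix $U$ with basis $u_1,\dots,u_k$. Modulo the forced kernel, a lift of $u_j$ is $\tilde u_j+2w_j$ with $w_j\in V/U^\perp\simeq U^*$, giving $2^k$ choices per basis vector. This is well defined because, for $w\in U^\perp$, both $\beta(\tilde u_j,2w)$ and $2\beta(\tilde u_j,w)$ vanish mod $4$. Evenness of the generated group reduces to $q_2=0$ on each generator and $\beta=0$ on each pair, by polarization. For the first condition,
\[
q_2(\tilde u_j+2w_j)\equiv q_2(\tilde u_j)+2\beta(\tilde u_j,w_j)\pmod4 .
\]
Here $q_2(\tilde u_j)$ is even, and $w\mapsto\bar\beta(u_j,w)$ is a nonzero functional on $V/U^\perp$. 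So exactly half of the choices work. For $k=1$ this leaves $1$ lift per nonzero singular vector, giving $9$ codes.

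For $k=2$ the diagonal conditions leave $2\cdot2$ choices, and the cross condition $\beta(\tilde u_1+2w_1,\tilde u_2+2w_2)\equiv0\pmod4$ should cut this in half. This halving is the step I expect to need the most care. To justify it, choose $w\in V$ with $\bar\beta(w,u_1)=0$ and $\bar\beta(w,u_2)=1$; such a $w$ exists by independence of $u_1,u_2$ and nondegeneracy. Replacing $w_1$ by $w_1+w$ preserves the $q_2$-condition on the first generator and flips the cross pairing by $2$. Therefore each plane carries exactly $2$ codes, and the six planes of Lemma~\ref{lem:F2geometry} give $12$.
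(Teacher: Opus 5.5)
Your proof is correct. For the $\Z_4\oplus\Z_2^2$ and $\Z_2^4$ cases it is essentially the paper's argument. Your identification $\C\cap2M_2=2\pi^{-1}(U^\perp)$ is the paper's $\C[2]=2r^\perp$. You obtain it in one step from $\C=\C^\perp$, whereas the paper uses isotropy plus an order count. You also read off the group type from $k$ through $|2\C|=2^k$, while the paper deduces $k$ from the type.

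The genuine difference is the $\Z_4^2$ case. The paper first normalizes the plane: a coordinate plane is moved to $\{(a,0)\}$, and a diagonal plane is the graph of $J\bmod2$. It then shows that $\C$ is a graph $\{(a,Aa)\}$ and solves evenness explicitly, getting $A=bJ$ with $b\in\{0,2\}$ or $b\in\{1,3\}$. You instead parametrize lifts of a basis of $U$, modulo the forced kernel, by $(w_1,w_2)\in(V/U^\perp)^2$. You treat evenness as affine conditions over $\F_2$ and halve the count with the involution $w_1\mapsto w_1+w$.

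Your route is uniform over all six planes and needs neither the coordinate/diagonal split nor a block-monomial normalization. It also extends unchanged to any $d$: the even self-dual lifts of a $k$-dimensional totally singular $U$ form a torsor under alternating $k\times k$ matrices over $\F_2$, so there are $2^{\binom{k}{2}}$ of them. The paper's route buys explicit generators. It uses them at once in Proposition~\ref{prop:d2orbits} to separate the $b=0$ and $b=2$ lifts, and later for the $(G,B)$ representatives.

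Two steps you leave implicit deserve a sentence each. In your converse, $|\C\cap2M_2|=2^{4-k}$ also needs $\C\cap2M_2\subseteq2\pi^{-1}(U^\perp)$; this follows from isotropy, which is a consequence of evenness, not from the containment you impose. And distinct admissible tuples give distinct codes because each $u_j$ has a unique lift in $\C$ modulo $\C\cap2M_2$.
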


\begin{proof}
The kernel of reduction modulo two on $\C$ is $\C[2]=\C\cap2\Z_4^4$. For the three possible abstract types, $\C[2]$ has orders $4,8,16$, respectively. Since $|\C|=16$, their images in $V$ have dimensions $2,1,0$. The image is totally singular: $q_2$ vanishes on $\C$, and reducing $q_2(c+c')-q_2(c)-q_2(c')=\beta(c,c')$ modulo two gives the polar condition.

Suppose first that $\C\simeq\Z_4^2$. Its image $W$ is one of the six singular planes in Lemma~\ref{lem:F2geometry}. Over a coordinate plane we may arrange $W=\{(a,0):a\in\F_2^2\}$. Choose two generators of $\C$. Their projected $a$-coordinate columns are independent modulo two, so the determinant of their $2\times2$ matrix is odd. That matrix is invertible over $\Z_4$. The projection $\C\to\Z_4^2$ is consequently an isomorphism, and there is a unique matrix $A$ with
\[
 \C=\{(a,Aa):a\in\Z_4^2\},\qquad A\equiv0\pmod2.
\]
Pairing two graph vectors shows that isotropy is $A+A^T=0$. Applying $a^TAa=0$ to the two coordinate vectors forces $A_{11}=A_{22}=0$. Conversely these diagonal conditions and $A+A^T=0$ make $a^TAa$ vanish for every $a$. Hence $A=bJ$ with $b=0$ or $2$. Each graph has order sixteen and is even isotropic, hence self-dual. Distinct $b$ give distinct graphs. Thus each of the four coordinate planes has exactly two lifts.

One diagonal plane is $\operatorname{graph}(J\bmod2)$; the other is its image under the exchange $a_2\leftrightarrow b_2$. For the graph plane the same invertible-projection argument applies, now with $A\equiv J\pmod2$. The even isotropic graphs are exactly $A=bJ$ with $b=1,3$. The coordinate exchange transports these two lifts bijectively to the other diagonal plane. The number of free codes is therefore $4\cdot2+2\cdot2=12$.

Now suppose $\C\simeq\Z_4\oplus\Z_2^2$. Its image is generated by one nonzero singular vector $r$. An order-four word $v\in\C$ reduces to $r$, and every element of $\C[2]$ is $2t$ for a unique $t\in V$. The congruence
\[
 \beta(v,2t)\equiv2\bar\beta(r,t)\pmod4
\]
gives $\C[2]\subset2r^{\perp_{\bar\beta}}$. Nondegeneracy makes $r^{\perp_{\bar\beta}}$ three-dimensional, so both sides have order eight and are equal. We write $r^\perp$ for this orthogonal complement in the rest of the proof. Since $\C/\C[2]$ has order two and is generated by the image of $v$, one has $\C=\langle v,2r^\perp\rangle$.

For any lift $v$ of $r$ and any $t\in V$, using integral representatives gives
\begin{equation}\label{eq:quadratic-lift}
 q_2(v+2t)\equiv q_2(v)+2\bar\beta(r,t)\pmod4.
\end{equation}
Thus two even lifts $v,v'$ of $r$ differ by an element of $2r^\perp$ and generate the same code with that subgroup. An even lift always exists: a zero-or-one lift has $q_2(v)=0$ or $2$ modulo four; in the second case choose $t$ with $\bar\beta(r,t)=1$ and apply \eqref{eq:quadratic-lift}. Such a $t$ exists because $r\ne0$ and $\bar\beta$ is nondegenerate.

Fix an even lift and form $\C_r=\langle v,2r^\perp\rangle$. The intersection of $\langle v\rangle$ with $2r^\perp$ is $\{0,2v\}$: its even elements are exactly these two, and $r\in r^\perp$ because the polar form is alternating. Consequently
\[
 |\C_r|=\frac{4\cdot8}{2}=16.
\]
For $t\in r^\perp$, one has $q_2(2t)=0$ and $\beta(v,2t)=0$ modulo four. Since $v$ is even, every word satisfies
\[
 q_2(kv+2t)=k^2q_2(v)+q_2(2t)+k\beta(v,2t)=0\pmod4.
\]
Thus $\C_r$ is even; polarization gives isotropy, and its order gives self-duality. Its reduction has dimension one, so it has the required mixed type. Distinct $r$ give distinct reductions and hence distinct codes. There are exactly nine mixed codes.

Finally a subgroup of type $\Z_2^4$ is contained in $2\Z_4^4$ and has the same order. It must equal that subgroup. All pairings and quadratic values of its words vanish modulo four, so it is even self-dual. This gives the unique remaining code.
\end{proof}

\begin{proposition}[Block-monomial orbit decomposition]\label{prop:d2orbits}
The twelve free codes split into three block-monomial orbits of size $4$, the nine mixed codes split into orbits of sizes $4,4,1$, and the two-torsion code is fixed.  Hence the complete orbit-size list is
\[
4,4,4,4,4,1,1.
\]
\end{proposition}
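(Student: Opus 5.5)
The approach is to reduce everything modulo two and use Lemma~\ref{lem:F2geometry}, taking care that reduction does not merge orbits of distinct codes. The block-monomial group $G$ is generated by pair permutations, exchanges $a_i\leftrightarrow b_i$, and simultaneous sign changes within a pair. It acts on the set of even self-dual codes, and it preserves abstract group type and the reduction map $\C\mapsto\bar\C\subset V$. Orbits therefore never mix the three types, and the reduction of an orbit is a union of orbits of the reduced group on $V$.

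\emph{Mixed codes.} By Proposition~\ref{prop:d2count}, $\C\mapsto r$ is a bijection between the nine mixed codes and the nine nonzero singular vectors, with $\C_r$ uniquely determined by $r$. This bijection is $G$-equivariant, since $g\C_r$ is mixed with reduction $\bar g r$, hence equals $\C_{\bar g r}$. The orbits are therefore exactly the vector orbits of Lemma~\ref{lem:F2geometry}(ii), with sizes $4,4,1$. The two-torsion code $2\Z_4^4$ is clearly fixed by every element of $G$.

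\emph{Free codes.} Each of the six singular planes has exactly two lifts. Over a coordinate plane the lifts are $A=0$ and $A=2J$; over a diagonal plane they are $A=J$ and $A=3J$, or their transports. By Lemma~\ref{lem:F2geometry}(i) the planes form orbits of sizes $4$ and $2$, so the free codes form at most $8+4$ codes in orbits whose reductions are these plane orbits. It remains to determine, for each plane, whether its stabilizer in $G$ swaps the two lifts.

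\emph{Coordinate planes, $W=\{(a,0)\}$.} The code $\{(a,0)\}$ is the lift $b=0$, and the graph of $2J$ is the lift $b=2$. A monomial map stabilizing $W$ cannot exchange these, because $\{(a,0)\}$ contains a word with a single odd nonzero coordinate, namely $(1,0,0,0)$. In the graph of $2J$, the word over $a=(1,0)$ is $(1,0,0,2)$, which has two nonzero coordinates, and no word there has exactly one nonzero entry with that entry odd. Block-monomial maps preserve the number of nonzero coordinates and their parity, so the two lifts lie in different orbits. Each lift has four images, one over each coordinate plane, which gives two orbits of size $4$.

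\emph{Diagonal planes.} Here the two lifts $J$ and $3J=-J$ must be shown to merge. A sign change in the second pair sends the graph $\{(a,Ja)\}$ to $\{(a_1,-a_2,Ja)$ with modified signs$\}$. Computing directly, $\operatorname{diag}(1,-1)$ acting on both $a$ and $b$ conjugates $J$ to $-J$, so it maps the lift $J$ to the lift $3J$ over the same plane. Combined with the exchange $a_2\leftrightarrow b_2$ that swaps the two diagonal planes, this gives a single orbit of size $4$ containing all four diagonal lifts.

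In total there are three free orbits of size $4$, mixed orbits of sizes $4,4,1$, and the fixed code, which gives the list $4,4,4,4,4,1,1$. The main obstacle is the invariant distinguishing the two coordinate lifts. I would check it by listing the sixteen words of the graph of $2J$ explicitly and confirming that none has exactly one nonzero coordinate with that coordinate odd.
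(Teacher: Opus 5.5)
Your proposal is correct and follows essentially the same route as the paper. The mixed codes are handled by equivariance of $r\mapsto\C_r$. The $b=0$ and $b=2$ lifts over a coordinate plane are separated by the invariant ``contains a word with a single nonzero coordinate, that coordinate odd''. The four diagonal lifts are merged by an operation conjugating $J$ to $-J$, followed by the exchange $a_2\leftrightarrow b_2$. The only difference is that you use the sign change $\operatorname{diag}(1,-1)$ in the second pair, whereas the paper interchanges the two hyperbolic pairs. Both operations satisfy $PJP^{-1}=-J$ and work equally well.
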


\begin{proof}
The reduction map is equivariant for the block-monomial group. Above each of the four coordinate planes are the two even graph lifts described in Proposition~\ref{prop:d2count}. The $b=0$ graph contains an order-four word with exactly one nonzero coordinate, and the $b=2$ graph does not. For the latter assertion, any order-four graph word has an odd $a$-coordinate; the corresponding entry of $2Ja$ is nonzero in the opposite coordinate pair. Pair permutations, exchanges, and sign changes preserve the number of nonzero coordinates, so the two lifts cannot be equivalent.

The reduced group is transitive on the four coordinate planes. Above every such plane there is exactly one lift with the first property and one without it. Transporting a lift by a block operation therefore proves transitivity on each of these two sets of four codes. Hence these are two orbits of size four.

The two diagonal planes form one reduced orbit. In the stabilizer of the graph plane, interchange the two hyperbolic pairs. On a graph matrix this conjugates $J$ to $-J$ and thus interchanges $b=1$ and $b=3$. Hence the two lifts over one plane are equivalent. An exchange $a_2\leftrightarrow b_2$ then reaches the other diagonal plane, giving a single orbit containing all four diagonal lifts. None can meet a coordinate lift because their reductions belong to different orbits.

For a mixed code, the correspondence $r\mapsto\C_r$ is unique and equivariant. Its three orbits therefore have the same sizes $4,4,1$ as the singular-vector orbits of Lemma~\ref{lem:F2geometry}. The full two-torsion subgroup is fixed by every automorphism of $\Z_4^4$. Their orbit sizes sum to $22$.
\end{proof}

\begin{theorem}\label{thm:d2class}
There are exactly twenty-two even self-dual split codes in $\mathbb Z_4^4$.  Up to block-monomial equivalence there are seven classes, listed in \cref{tab:d2}, with orbit sizes
\[
4,4,4,4,4,1,1.
\]
\end{theorem}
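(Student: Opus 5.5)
The theorem is essentially the assembly of Propositions~\ref{prop:d2count} and~\ref{prop:d2orbits}, so the proof will combine them and check that nothing is lost in between.

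First, the total count. By the self-duality discussion in Section~\ref{sec:fixed}, every even self-dual $\C\subset M_2$ has order $4^2=16$. Since $\C\subset\Z_4^4$ has exponent dividing four, its abstract type is one of $\Z_4^2$, $\Z_4\oplus\Z_2^2$, $\Z_2^4$. These types are distinguished by the dimension $2,1,0$ of the reduction modulo two, so every code falls into exactly one of them. Proposition~\ref{prop:d2count} then gives $12+9+1=22$ labeled codes.

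Second, the orbit decomposition. The block-monomial group preserves $\beta$ and $q_2$, so it permutes the set of even self-dual codes. As group automorphisms it also preserves abstract type, so the orbit decomposition can be computed separately within each type. Proposition~\ref{prop:d2orbits} provides this decomposition: two orbits of size four over the coordinate planes (the $b=0$ and $b=2$ lifts), one orbit of size four over the diagonal planes, orbits of sizes $4,4,1$ among the mixed codes (via the equivariant bijection $r\mapsto\C_r$ and Lemma~\ref{lem:F2geometry}), and the fixed code $2\Z_4^4$. This gives seven orbits with sizes $4,4,4,4,4,1,1$, summing to $22$ as a consistency check.

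Finally, I would record a representative of each orbit in \cref{tab:d2}: the graphs $A=0$, $A=2J$, $A=J$ over the appropriate planes, $\C_r$ for $r=0001,0011,1111$ with even lifts, and $2\Z_4^4$.

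The only point that needs care is that classes of different types, or over reductions in different orbits, are never identified. That follows from type invariance together with the equivariance of reduction modulo two, both of which are already used in Proposition~\ref{prop:d2orbits}.
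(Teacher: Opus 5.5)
Your first two steps match the paper's proof. It likewise obtains the count $22$ and the orbit sizes $4,4,4,4,4,1,1$ directly from Propositions~\ref{prop:d2count} and~\ref{prop:d2orbits}.

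The gap is in your last step. The statement says the seven classes are the ones \emph{listed in Table~\ref{tab:d2}}, and the table does not consist of the normalized representatives you name. After citing the two propositions, the paper's proof does exactly this remaining work: it checks that the table's entries are even self-dual and lie in seven distinct orbits. Your proposal instead records different representatives. The table entries need explicit block-monomial moves:
\begin{itemize}
\item $\mathrm F_0=\langle0001,0010\rangle$ has $a=0$ throughout, so it is not a graph $\{(a,Aa)\}$ at all. It becomes the $b=0$ graph only after exchanging both hyperbolic pairs.
\item $\mathrm F_1=\langle0011,1300\rangle$ reduces to the diagonal plane $\langle1100,0011\rangle$, not to $\operatorname{graph}(J\bmod 2)=\langle1001,0110\rangle$. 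It becomes the $b=1$ graph $\langle0110,1003\rangle$ only after $a_2\leftrightarrow b_2$.
\item $\mathrm F_2=\langle0012,2100\rangle$ becomes the $b=2$ graph $\langle1002,0120\rangle$ after $a_1\leftrightarrow b_1$.
\end{itemize}

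The mixed rows need their own checks. The order-four generators $0001$, $0011$, $1113$ must be even (e.g.\ $q_2(1113)=1\cdot1+1\cdot3\equiv0$) and reduce to $r=0001,0011,1111$. Their doubles, together with the order-two generators, must span $2r^\perp$. After halving, these spans are $\langle0001,0010,1000\rangle$, $\langle0011,0001,1100\rangle$ and $\langle1111,0011,0101\rangle$, each three-dimensional and $\bar\beta$-orthogonal to $r$. The description $\C=\langle v,2r^\perp\rangle$ from Proposition~\ref{prop:d2count} then identifies these table codes with $\C_r$. So they are even self-dual and fall in the three distinct mixed orbits. Finally, the generators of $\mathrm T$ span $2\Z_4^4$.

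Without these checks you have shown that there are seven orbits of the right sizes, but not that the seven codes in Table~\ref{tab:d2} represent them.
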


\begin{proof}
Propositions~\ref{prop:d2count} and~\ref{prop:d2orbits} give the counts and orbit sizes. It remains to place the listed generators in those orbits.

For $\mathrm F_0$, exchange both hyperbolic pairs to obtain the graph with $b=0$. For $\mathrm F_1$, exchange $a_2$ and $b_2$; its generators become $0110$ and $1003$, the graph with $b=1$. For $\mathrm F_2$, exchange $a_1$ and $b_1$; its generators become $1002$ and $0120$, the graph with $b=2$. These are the three distinct free orbits.

The order-four generators for $\mathrm M_0$, $\mathrm M_1$, and $\mathrm M_2$ may be taken as $0001$, $0011$, and $1113$. They are even and reduce to $0001$, $0011$, and $1111$. For each row of the table, the doubles of that generator together with the two order-two generators span $2r^\perp$: after division by two they are, respectively,
\[
 \langle0001,0010,1000\rangle,\qquad
 \langle0011,0001,1100\rangle,\qquad
 \langle1111,0011,0101\rangle.
\]
Each displayed subspace has dimension three and is orthogonal to the corresponding $r$. Thus these codes are exactly the three representatives $\C_r$ constructed in Proposition~\ref{prop:d2count}. The generators of $\mathrm T$ span $2\Z_4^4$. All seven entries are therefore even self-dual and represent the seven different orbits.
\end{proof}

\begin{table}[ht]
\centering
\small
\begin{tabular}{c c c c c}
\toprule
Class & generators & group type & orbit size & equivalent graph form\\
\midrule
$\mathrm F_0$ & $\langle0001,0010\rangle$ & $\mathbb Z_4^2$ & $4$ & $b=0$\\
$\mathrm F_1$ & $\langle0011,1300\rangle$ & $\mathbb Z_4^2$ & $4$ & $b=\pm1$\\
$\mathrm F_2$ & $\langle0012,2100\rangle$ & $\mathbb Z_4^2$ & $4$ & $b=2$\\
$\mathrm M_0$ & $\langle0001,0020,2000\rangle$ & $\mathbb Z_4\oplus\mathbb Z_2^2$ & $4$ & no\\
$\mathrm M_1$ & $\langle0002,0011,2200\rangle$ & $\mathbb Z_4\oplus\mathbb Z_2^2$ & $4$ & no\\
$\mathrm M_2$ & $\langle0022,0202,1113\rangle$ & $\mathbb Z_4\oplus\mathbb Z_2^2$ & $1$ & no\\
$\mathrm T$ & $\langle0002,0020,0200,2000\rangle$ & $\mathbb Z_2^4$ & $1$ & no\\
\bottomrule
\end{tabular}
\caption{The seven block-monomial classes for $d=2$.}
\label{tab:d2}
\end{table}

For the three free classes one may use the graph representatives
\begin{equation}\label{eq:graphrep}
\C_b=\{(a,bJa):a\in\mathbb Z_4^2\},
\qquad b=0,1,2,
\end{equation}
with $b=3$ equivalent to $b=1$. A graph vector is the
pair $(a,bJa)$: its second component is the image of its first under
the linear map $bJ$.

\begin{remark}[Finite orthogonal equivalence]\label{rem:finite-orthogonal}
The block-monomial equivalence of Theorem~\ref{thm:d2class} is
smaller than equivalence under the finite orthogonal group
\[
 O(M_2,q_2)=\{g\in\operatorname{Aut}_{\Z_4}(M_2):
 q_2(gv)=q_2(v)\text{ for all }v\in M_2\}.
\]
For $s\in\Z_4$, define
\[
 h_s(a,b)=(a,b+sJa).
\]
Since $a^TJa=0$, the map $h_s$ preserves $q_2$, and hence its
polar form $\beta$. In particular,
$h_s(\C_b)=\C_{b+s}$, so the three free classes lie in one
$O(M_2,q_2)$-orbit. For the mixed classes, take
\[
 D_0=\langle1000,0200,0002\rangle\in\mathrm M_0,
 \qquad D_1=\langle1100,0200,0022\rangle\in\mathrm M_1.
\]
Then
\[
 h_1(D_0)=\langle1003,0220,0002\rangle\in\mathrm M_1,
 \qquad
 h_1(D_1)=\langle1113,0220,0022\rangle\in\mathrm M_2.
\]
Thus the nine mixed codes also form one orbit. Since an automorphism
preserves the abstract group type of a code, the twenty-two codes
form exactly three orbits under $O(M_2,q_2)$, of sizes $12$, $9$, and
$1$, corresponding to $\Z_4^2$, $\Z_4\oplus\Z_2^2$, and $\Z_2^4$.
These are finite-code orbits, not T-duality classes. Indeed, the
lift of $h_s$ to the ambient real space,
\[
 (x,y)\longmapsto(x,y+s(J\otimes I_3)x),
\]
preserves $\eta$ but does not in general preserve $\HH$ in
\eqref{eq:forms}. The seven Narain backgrounds remain inequivalent
by Theorem~\ref{prop:d2Tduality}.
\end{remark}

\begin{remark}[Dependence on the base lattice]\label{rem:base}
Let $R'$ be a positive definite integral lattice with
$R'^*/R'\simeq\Z_4$. The nondegenerate discriminant pairing has the form
$uab/4\pmod\Z$, with $u$ odd, after choosing a generator. For the split
construction on $R'^d\oplus R'^d$, the integrality and evenness conditions
are therefore $u\beta=0$ and $uq_d=0$ modulo four. They are equivalent to
$\beta=0$ and $q_d=0$. Thus the twenty-two codes, the seven block-monomial
classes, and their orbit sizes are unchanged. This does not identify the
resulting CFTs: the rank of $R'$, its Euclidean cosets, and their norms
determine the central charge and spectra. Additional T-dualities must be
examined for each choice of $R'$.

For $R'=2\Z$, with dual $\tfrac12\Z$, the lattice is exactly
\[
 \frac12\{n\in\Z^{2d}:n\bmod4\in\C\},
\]
the ordinary $\Z_4$ Construction A at central charge $d$
\cite[Theorem~4.3]{AlamEtAl}. This rank-one lattice is sometimes denoted
$D_1$; it is not the root lattice $A_1$ in its norm-two normalization.
For odd $m\ge3$, $D_m$ has the same cyclic discriminant group, with
pairing $mab/4\pmod\Z$, and central charge $md$.

The torsion code for one pair occurs in \cite[Eq.~(4.31)]{AlamEtAl}.
That paper also treats two-pair examples. In particular, Case~3 of
\cite[Section~4.3.2, Eq.~(4.62)]{AlamEtAl}, specialized to modulus four, gives
$\{(a,ax,-bx,b):a,b\in\Z_4\}$; $x=3$ gives $\mathrm F_1$ in
Table~\ref{tab:d2}, and $x=2$ gives $\mathrm F_2$ after a pair permutation.
These are particular labeled codes within the classification of
Theorem~\ref{thm:d2class}.
\end{remark}

\section{Metric and \texorpdfstring{$B$}{B}-field representatives}\label{sec:GB}

\subsection{Coordinate convention}
Put $n=3d$. We use the dimensionless Narain convention
\begin{equation}\label{eq:narainform}
 \binom{x}{y}
 =\begin{pmatrix}\gamma^{-T}&\gamma^{-T}B\\0&\gamma\end{pmatrix}
 \binom{k}{\ell},
 \qquad k,\ell\in\Z^n,\quad G=\gamma^T\gamma,\quad B^T=-B.
\end{equation}
Here $B$ is the two-form in the integral winding basis. The displayed basis has split Gram matrix $\left(\begin{smallmatrix}0&I_n\\I_n&0\end{smallmatrix}\right)$. Its positive Gram matrix is
\[
 \begin{pmatrix}G^{-1}&G^{-1}B\\-BG^{-1}&G-BG^{-1}B\end{pmatrix}.
\]
Thus $(G,B)$ determines both forms. The corresponding skew map in
ambient Euclidean coordinates is $\mathcal B=\gamma^{-T}B\gamma^{-1}$.

For an integral antisymmetric matrix $C$, put $B'=B+C$. The same momentum vector has charge coordinates $k'=k-C\ell$ and $\ell'=\ell$ in the primed description, since
\[
 k'+B'\ell'=(k-C\ell)+(B+C)\ell=k+B\ell.
\]
This is an integral unimodular relabelling; its matrix is $\left(\begin{smallmatrix}I&-C\\0&I\end{smallmatrix}\right)$, which preserves the split pairing because $C^T=-C$. For a winding-basis change $A\in GL_n(\Z)$, set
\[
 \gamma'=\gamma A,\qquad B'=A^TBA,\qquad
 k'=A^Tk,\quad\ell'=A^{-1}\ell.
\]
Substitution in \eqref{eq:narainform} again gives the same $(x,y)$, while $G'=A^TGA$. The charge transformation $\diag(A^T,A^{-1})$ is integral unimodular and preserves the split pairing. Equivalence of representatives requires an integral Lorentzian
isometry carrying the source positive form to the target positive form.

\subsection{The free codes}
Let $S$ be the matrix of the simple-root basis of $A_3$ in the $D_3$ realization of Section~\ref{sec:fixed}, and put $S^*=S^{-T}$. Then $S^TS=K$ and $(S^*)^TS^*=K^{-1}$. For the graph representative
\[
 \C_b=\{(a,bJa):a\in\Z_4^2\},\qquad b=0,1,2,
\]
write $L=I_2\otimes S$ and $\gamma=I_2\otimes S^*$. The gluing condition is exactly
\[
 x\in W^2,\qquad y-b(J\otimes I_3)x\in R^2.
\]
Indeed $b(J\otimes I_3)$ preserves $W^2$ and $R^2$ and induces $bJ$ on the discriminant labels. Every allowed point has a unique expression
\[
 x=\gamma\ell,\qquad
 y=Lk+b(J\otimes I_3)\gamma\ell,
 \qquad k,\ell\in\Z^6,
\]
since the columns of $\gamma$ and $L$ are bases of $W^2$ and $R^2$. Interchanging the two null coordinate groups preserves both forms. In the resulting coordinates the lattice therefore has the full integral basis
\[
 \begin{pmatrix}L&b(J\otimes I_3)\gamma\\0&\gamma\end{pmatrix}.
\]
Here $L=\gamma^{-T}$. The upper-right block in \eqref{eq:narainform} must consequently satisfy $\gamma^{-T}B=b(J\otimes I_3)\gamma$, or
\[
 B=\gamma^T b(J\otimes I_3)\gamma=bJ\otimes K^{-1}.
\]
This matrix is antisymmetric, so the basis is null for the split pairing. Its positive form is determined by
\begin{equation}\label{eq:GBgraph}
 G=I_2\otimes K^{-1},\qquad B=bJ\otimes K^{-1}.
\end{equation}
The graph parameter $b=3$ is equivalent to $b=1$.

For one pair, use
\begin{equation}\label{eq:GA}
 G_A=\begin{pmatrix}\tfrac34&-\tfrac12&-\tfrac12\\-\tfrac12&1&0\\-\tfrac12&0&1\end{pmatrix}.
\end{equation}
The columns $(1/2,-1/2,-1/2)^T,e_2,e_3$ lie in $D_3^*$ and have determinant of absolute value $1/2$. This equals the covolume of $D_3^*$, since $D_3$ has covolume two, so they form a basis. Their Gram matrix is $G_A$. The free one-pair lattice is $W\oplus R$, and exchanging the null coordinates gives the basis $\diag(\gamma^{-T},\gamma)$ with these columns of $\gamma$. It has representative $(G_A,0)$. For $\C_2$ both null coordinates range over $W_0\cup W_2=\Z^3$, giving $(I_3,0)$.

\subsection{The nonfree codes}
For the four representatives in Table~\ref{tab:d2}, the Narain data are
\begin{align}
 (G_{\mathrm M_0},B_{\mathrm M_0})&=(I_3\oplus G_A,0),\\
 (G_{\mathrm M_1},B_{\mathrm M_1})&=(G_M,0),\\
 (G_{\mathrm M_2},B_{\mathrm M_2})&=(G_M,B_M),\\
 (G_{\mathrm T},B_{\mathrm T})&=(I_6,0),
\end{align}
where
\begin{equation}\label{eq:GM}
 G_M=\begin{pmatrix}
 \tfrac32&-\tfrac12&-\tfrac12&-\tfrac12&-\tfrac12&-\tfrac12\\
 -\tfrac12&1&0&0&0&0\\-\tfrac12&0&1&0&0&0\\
 -\tfrac12&0&0&1&0&0\\-\tfrac12&0&0&0&1&0\\-\tfrac12&0&0&0&0&1
 \end{pmatrix},
\end{equation}
and, in the winding-basis convention of \eqref{eq:narainform},
\begin{equation}\label{eq:BM}
 B_M=\frac12\begin{pmatrix}
 0&-1&-1&-1&-1&-1\\1&0&0&0&0&0\\1&0&0&0&0&0\\
 1&0&0&0&0&0\\1&0&0&0&0&0\\1&0&0&0&0&0
 \end{pmatrix}.
\end{equation}
To prove these formulas, work with the numerator lattice
\[
 \mathcal L_{\C}=\{z\in\Z^{12}:z\bmod4\in\widehat\C\},
 \qquad \Lam_{\cA}(\widehat\C)=\tfrac12\mathcal L_{\C}.
\]
For a gluing code $\C\subset\Z_4^2\oplus\Z_4^2$, put
\[
 A_{\C}=\{a\in\Z_4^2:(a,0)\in\C\}.
\]
The pure first-null part of the numerator lattice is the inverse image of
$A_{\C}$ under the two-block quotient map. For the four nonfree classes,
\[
 A_{\mathrm M_0}=\{00,20\},\qquad
 A_{\mathrm M_1}=A_{\mathrm M_2}=\{00,22\},\qquad
 A_{\mathrm T}=\{00,02,20,22\}.
\]
Set
\[
 H_A=\begin{pmatrix}4&2&2\\0&2&0\\0&0&2\end{pmatrix},
 \qquad
 H_M=\begin{pmatrix}
 4&2&2&2&2&2\\0&2&0&0&0&0\\0&0&2&0&0&0\\
 0&0&0&2&0&0\\0&0&0&0&2&0\\0&0&0&0&0&2
 \end{pmatrix},
\]
and take
\[
 H_{\mathrm M_0}=2I_3\oplus H_A,\qquad
 H_{\mathrm M_1}=H_{\mathrm M_2}=H_M,\qquad H_{\mathrm T}=2I_6.
\]
The column labels follow from \eqref{eq:blocklabel}. In $H_A$, the columns are $4e_1$, $2(e_1+e_2)$, and $2(e_1+e_3)$, all of label zero. Thus the columns of $H_{\mathrm M_0}$ have labels $20$ in its first three positions and $00$ in the last three. In $H_M$, the first three columns have labels $00$ and the last three have labels $22$. The columns of $H_{\mathrm T}$ have labels $20$ or $02$. These are all in the respective groups $A_{\C}$ above.

The two-block quotient map has kernel $E^2$ of order $|E|^2$. Hence the residue subgroup of the pure numerator lattice has order $|E|^2|A_{\C}|$, and its inverse image in $\Z^6$ has index
\[
 \frac{4^6}{|E|^2|A_{\C}|}.
\]
This index is $128$ for $\mathrm M_0,\mathrm M_1,\mathrm M_2$ and $64$ for $\mathrm T$. The triangular matrices $H_{\C}$ have exactly these absolute determinants. Their column lattices are contained in the pure numerator lattices and have the same indices in $\Z^6$. The matching indices prove that the displayed columns are bases of those lattices.

Put
\[
 Y_{\C}=4H_{\C}^{-T}.
\]
The required inverse transposes are
\[
 4H_A^{-T}=\begin{pmatrix}1&0&0\\-1&2&0\\-1&0&2\end{pmatrix},
 \qquad
 4H_M^{-T}=\begin{pmatrix}1&0_{1\times5}\\-\mathbf1_5&2I_5\end{pmatrix},
 \qquad \mathbf1_5=(1,1,1,1,1)^T.
\]
Together with $Y_{\mathrm M_0}=2I_3\oplus4H_A^{-T}$ and $Y_{\mathrm T}=2I_6$, these show integrality and give $H_{\C}^TY_{\C}=4I_6$. For $\mathrm M_0,\mathrm M_1,\mathrm T$ take $X_{\C}=0$. For $\mathrm M_2$, multiplication gives
\[
 X_{\mathrm M_2}=H_MB_M=
 \begin{pmatrix}5&-2\mathbf1_5^T\\\mathbf1_5&0_{5\times5}\end{pmatrix},
\]
which is integral despite the half-integral entries of $B_M$. In this case
\[
 X_{\C}^TY_{\C}+Y_{\C}^TX_{\C}
 =4B_M^T+4B_M=0.
\]
For the other three classes, the identity holds because $X_{\C}=0$.
Thus the columns of
\begin{equation}\label{eq:NCbasis}
 \mathcal N_{\C}=\begin{pmatrix}H_{\C}&X_{\C}\\0&Y_{\C}\end{pmatrix}
\end{equation}
have split Gram matrix $4\left(\begin{smallmatrix}0&I_6\\I_6&0\end{smallmatrix}\right)$.
The first six columns belong to the numerator lattice by the definition of $H_{\C}$. For the last six columns, the discriminant labels are
\[
\begin{array}{c|l}
\C&\text{labels of }(X_{\C}e_j,Y_{\C}e_j),\ 1\le j\le6\\\hline
\mathrm M_0&0020,0020,0020,0001,0002,0002\\
\mathrm M_1&0013,0020,0020,0002,0002,0002\\
\mathrm M_2&1113,2020,2020,2002,2002,2002\\
\mathrm T&0020,0020,0020,0002,0002,0002.
\end{array}
\]
The labels are obtained by applying \eqref{eq:blocklabel} to the four blocks of each column. For $\mathrm M_0$ and $\mathrm T$ the resulting words are among their generators or their doubles. For $\mathrm M_1$, the only additional checks are
\[
 0013=0011+0002,\qquad 0020=2(0011)+0002.
\]
For $\mathrm M_2$ they are
\[
 2020=2(1113)+0202,\qquad
 2002=2(1113)+0022+0202.
\]
All equalities here are modulo four. Thus every complementary column also belongs to $\mathcal L_{\C}$. Finally,
\[
 |\det\mathcal N_{\C}|=|\det H_{\C}|\,|\det(4H_{\C}^{-T})|=4^6,
\]
which is exactly the index of the numerator lattice because $|\widehat\C|=4^6$. Equality of indices gives equality of these lattices. Dividing \eqref{eq:NCbasis} by two gives a full integral null basis.

Set $\gamma=2H_{\C}^{-T}$. Then $G=\gamma^T\gamma$ gives the metrics above, while $X_{\C}=H_{\C}B$ gives $B=0$ for $\mathrm M_0,\mathrm M_1,\mathrm T$ and \eqref{eq:BM} for $\mathrm M_2$. 

Any other null complement dual to $H_{\C}/2$ has the same lower block $Y_{\C}/2$, because the dual-pairing equation is $H_{\C}^TY=4I_6$. The difference of its upper block from $X_{\C}/2$ lies in the pure first-null lattice. Hence $X'_{\C}-X_{\C}=H_{\C}C$ with $C$ integral. The null condition forces $C^T=-C$, so the two winding-basis forms differ by the integral antisymmetric matrix $C$.

For $\mathrm M_2$, $B$ cannot be integral with this first-null basis. The first column of
$Y_{\mathrm M_2}$ has second-half label $13$. In $\C_{\mathrm M_2}$ a
word with second half $13$ has first-half label $11$ or $33$, whereas
the pure first-null subgroup has labels only $00$ and $22$. If $B$ were
integral, then $X=H_MB$ would lie in the pure first-null lattice and its
first-half label would belong to $\{00,22\}$, a contradiction. The
class of $B_M$ modulo integral antisymmetric matrices therefore has
order exactly two, since $2B_M$ is integral. It remains nonzero after an
integral change of winding basis: if $A^TB_MA$ were integral for
$A\in GL_6(\Z)$, then $B_M=A^{-T}(A^TB_MA)A^{-1}$ would be integral.
The order of this matrix class refers to the chosen null polarization.

\subsection{The obstruction to a zero \texorpdfstring{$B$}{B}-field}

\begin{proposition}\label{prop:essentialB}
Among the seven two-pair classes, $\mathrm M_2$ is the only one with no
T-duality representative having $B=0$. In fact no representative of
$\mathrm M_2$ has integral $B$.
\end{proposition}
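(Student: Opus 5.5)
The plan is to prove the two assertions separately: first that each of the six classes other than $\mathrm M_2$ has a $B=0$ frame, then that $\mathrm M_2$ has no frame with integral $B$. The second part covers the first assertion for $\mathrm M_2$, since $B=0$ is integral.

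\emph{Zero-$B$ frames for the other six classes.} The representatives of Section~\ref{sec:GB} already have $B=0$ for $\mathrm F_0$, $\mathrm M_0$, $\mathrm M_1$ and $\mathrm T$. For $\mathrm F_1$ and $\mathrm F_2$ the graph representatives carry $B=bJ\otimes K^{-1}$, which is not integral, so a different null polarization is needed. I would look for a decomposition $\Lam=N_1\oplus N_2$ into $\eta$-null sublattices of rank six with $N_1\subset\{y=0\}$ and $N_2\subset\{x=0\}$ in some ambient frame related to the original by an $O(6)\times O(6)$ rotation of $(p_L,p_R)$. Equivalently, I would look for a splitting of $(\Lam,\HH)$ into two $\HH$-orthogonal rank-six sublattices that are $\eta$-null and dual to each other. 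In the $\C_b$ description, the natural first attempt is to regroup the blocks so that $x$ and $y-b(J\otimes I_3)x$ play symmetric roles, for instance by mixing the two pairs via $J$. If that is awkward, I would fall back on the fact, proved later in the paper, that distinct theta series separate the classes; it would then suffice to exhibit some zero-$B$ background with the theta series of $\mathrm F_1$ (resp.\ $\mathrm F_2$). This is the step I am least sure about.

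\emph{Reduction for $\mathrm M_2$.} If some frame of $\mathrm M_2$ had integral $B$, the relabelling $B\mapsto B+C$ with $C=-B$ would give a $B=0$ frame. So both statements for $\mathrm M_2$ reduce to excluding $B=0$.

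\emph{Invariant splitting from $B=0$.} In a $B=0$ frame, \eqref{eq:narainform} writes $\Lam=\{(\gamma^{-T}k,\gamma\ell)\}=\Gamma^*\oplus\Gamma$, with $\Gamma^*$ in the $x$-space and $\Gamma$ in the $y$-space. These pieces are $\HH$-orthogonal. Hence the positive definite lattice $(\Lam,\HH)$ is an orthogonal sum of two nonzero summands, and this property is invariant under T-duality. So it suffices to show that $(\Lam_{\mathrm M_2},\HH)$ is orthogonally indecomposable.

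\emph{Indecomposability of $\mathrm M_2$.} Using the basis $\mathcal N_{\mathrm M_2}/2$, I would enumerate the vectors of $\HH$-norm $2$, which are the states with $\Delta=1$. The goal is to show that they form a $D_{12}$ root system spanning $\R^{12}$ and that the minimal $\HH$-norm is $2$. By the standard argument, a minimal vector of an orthogonal sum lies in one summand: its components have positive norms summing to the minimum, and each nonzero component is a lattice vector of norm at least the minimum. Therefore each root lies in a single summand. Since $D_{12}$ is irreducible, all roots lie in one summand, and since they span, the other summand is zero. This contradicts the existence of a $B=0$ frame.

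I expect the main obstacle to be the explicit enumeration for $\mathrm M_2$, namely checking the minimum and identifying the root system as $D_{12}$. I would do this in ordinary coordinates via Theorem~\ref{thm:main}, listing the words of $\widehat{\mathrm M_2}$ that lift to vectors with $\|z\|^2=8$ in numerator coordinates. The fallback for $\mathrm F_1$ and $\mathrm F_2$ described above is the second place where the argument may need adjustment.
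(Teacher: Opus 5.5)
The $\mathrm M_2$ half of your plan is correct and is the paper's own argument. Any integral $B$ is removed by a $B$-shift. A $B=0$ frame gives positive Gram matrix $\operatorname{diag}(G^{-1},G)$, hence an $\HH$-orthogonal splitting into two rank-six lattices. The irreducible, spanning $D_{12}$ system of minimal vectors rules this out.

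The enumeration you are worried about is short. The even words of $\mathrm M_2$ are twice the binary even-weight words of length four. An integral triple has label $0$ or $2$ according to the parity of its coordinate sum. So in the twelve Euclidean coordinates $\Lam_{A_3}(\mathrm M_2)=D_{12}\cup(s+D_{12})$ with $s=\tfrac12(1,\ldots,1,-1)$. The half-integral coset has squared norm at least $3$, so the norm-two vectors are exactly $\pm e_i\pm e_j$.

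The gap is in the other half. For $\mathrm F_1$ and $\mathrm F_2$ you never actually produce a $B=0$ frame, and your fallback is not valid. Theorem~\ref{prop:d2Tduality} separates classes by theta series only among the backgrounds coming from the twenty-two $d=2$ codes. In general, equal theta series do not imply T-duality: isospectral inequivalent Narain theories exist, as the paper notes after that theorem. So a zero-$B$ background that merely shares the theta series of $\mathrm F_1$ proves nothing unless it is itself one of the code backgrounds. Showing that is exactly the problem you started with. Your suggestion of mixing the two pairs via $J$ is also left too vague to count as a construction.

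The missing observation is that the table representatives are already product codes $P\times P^\perp$, with $P$ in the first null labels, $P^\perp$ in the second, and $\perp$ the ordinary dot-product complement in $\Z_4^2$. The words of $\langle0011,1300\rangle$ are $(t,3t,s,s)$, so this code is $\langle13\rangle\times\langle11\rangle$. The words of $\langle0012,2100\rangle$ are $(2t,t,s,2s)$, so this code is $\langle21\rangle\times\langle12\rangle$.

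Let $L(P)$ be the inverse image of $P$ in $W^2$. Since the discriminant pairing is $3a\cdot b/4$ and $3$ is a unit, $L(P)^*=L(P^\perp)$. The momentum lattice is therefore $L(P)\oplus L(P)^*$, split between the $x$- and $y$-spaces. A basis $\gamma$ of $L(P)^*$ together with its dual basis $\gamma^{-T}$ realizes \eqref{eq:narainform} with $B=0$.

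These representatives are carried to the graph forms $\C_1$ and $\C_2$ by the single exchanges $a_2\leftrightarrow b_2$ and $a_1\leftrightarrow b_1$, respectively. Each exchange preserves both $\eta$ and $\HH$, so it is a T-duality. The fractional $B=bJ\otimes K^{-1}$ of the graph frames is thus removable without any theta-series comparison or mixing of the two pairs.
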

\begin{proof}
First consider the other six representatives in Table~\ref{tab:d2}.
They have the product form $P\times P^\perp$, where the orthogonal
complement is taken for the ordinary dot product on $\Z_4^2$:
\[
\begin{array}{c|c|c}
\text{class}&P&P^\perp\\\hline
\mathrm F_0&0&\Z_4^2\\
\mathrm F_1&\langle13\rangle&\langle11\rangle\\
\mathrm F_2&\langle21\rangle&\langle12\rangle\\
\mathrm M_0&\langle20\rangle&\langle01,20\rangle\\
\mathrm M_1&\langle22\rangle&\langle11,02\rangle\\
\mathrm T&2\Z_4^2&2\Z_4^2.
\end{array}
\]
Let $L(P)$ be the inverse image of $P$ in $W^2$. The discriminant
pairing is $3a\cdot b/4\pmod\Z$, so
$L(P)^*=L(P^\perp)$. Hence the momentum lattice is
$L(P)\oplus L(P)^*$ in the two null coordinate groups. A basis $\gamma$
of $L(P)^*$ and its dual basis $\gamma^{-T}$ give
\eqref{eq:narainform} with $B=0$. In particular, the fractional
$B$-fields of the graph representatives of $\mathrm F_1$ and
$\mathrm F_2$ are removable by T-duality.

For $\mathrm M_2$, use the twelve Euclidean coordinates consisting of
the six $x$-coordinates followed by the six $y$-coordinates. Set
\begin{equation}\label{eq:D12extension}
 D_{12}=\{z\in\Z^{12}:\textstyle\sum_jz_j\equiv0\pmod2\},
 \qquad s=\tfrac12(1,1,1,1,1,1,1,1,1,1,1,-1).
\end{equation}
Then
\begin{equation}\label{eq:M2positive}
 \Lam_{A_3}(\mathrm M_2)=D_{12}\ \cup\ (s+D_{12}).
\end{equation}
Indeed, the words with even entries in $\mathrm M_2$ are twice the
binary even-weight words of length four. An integral triple has label
zero or two according to the parity of its coordinate sum, so their
inverse image is $D_{12}$. The vector $s$ has labels $1113$ by
\eqref{eq:blocklabel}; the remaining codewords are its translates of
the even subgroup. This proves \eqref{eq:M2positive}.

The minimum squared norm for $\HH$ is two: the integral part has
minimum two, whereas every vector in the
half-integral coset has squared norm at least $12/4=3$. Its norm-two
vectors are exactly the roots $\pm e_i\pm e_j$ of $D_{12}$, which span
$\R^{12}$ and form an irreducible root system. These are roots of the
whole positive lattice, not the chiral root systems computed below.

T-duality preserves the momentum lattice with both forms, hence its
positive isometry class. If a frame had $B=0$, its positive Gram matrix would be
$\diag(G^{-1},G)$ by \eqref{eq:narainform}. The positive lattice would
then be an orthogonal direct sum of two rank-six lattices. In such a
decomposition a norm-two vector cannot have nonzero components in
both factors, since each nonzero component is a lattice vector of
squared norm at least two. Every $D_{12}$ root must therefore belong
to one factor. Nonorthogonal roots must belong to the same factor,
and irreducibility forces all the roots into one factor. This is
impossible because they span rank twelve. Finally, an integral
antisymmetric $B$ can be removed by an integral $B$-shift, so an
integral representative is also impossible.
\end{proof}

Absence of a zero-$B$ frame is T-duality invariant. The order-two class
of $B_M$ in \eqref{eq:BM} is defined in that polarization, without a
canonical identification across all polarizations.

\section{Theta series, spectra and T-duality}\label{sec:spectra}

\subsection{Exact theta formulas}
For $\tau$ in the upper half-plane put $q=e^{2\pi i\tau}$ and $\bar q=e^{-2\pi i\bar\tau}$. With the cosets from \eqref{eq:cosets}, define
\begin{equation}\label{eq:blocktheta}
 \Theta_{ab}(\tau,\bar\tau)
 =\sum_{x\in W_a}\sum_{y\in W_b}
 q^{|x+y|^2/4}\bar q^{|x-y|^2/4}.
\end{equation}
The coset decomposition gives
\begin{equation}\label{eq:thetaC}
 \Theta_{\C}(\tau,\bar\tau)
 =\sum_{(a,b)\in\C}\prod_{i=1}^d\Theta_{a_i b_i}(\tau,\bar\tau).
\end{equation}
Equivalently, \eqref{eq:thetaC} evaluates the paired-symbol weight enumerator $\sum_{(a,b)\in\C}\prod_iX_{a_i b_i}$ at $X_{ab}=\Theta_{ab}$. The torus partition function is
\begin{equation}\label{eq:partition}
 Z_{\C}(\tau,\bar\tau)=\frac{\Theta_{\C}(\tau,\bar\tau)}{|\eta_D(\tau)|^{6d}},
 \qquad \eta_D(\tau)=q^{1/24}\prod_{n\ge1}(1-q^n).
\end{equation}
The denominator accounts for the left and right oscillators of $3d$ compact bosons. In numerator coordinates $x=n/2$, $y=m/2$,
\begin{equation}\label{eq:hexponents}
 h=|n+m|^2/16,\qquad\bar h=|n-m|^2/16,
 \qquad \Delta=h+\bar h=(|n|^2+|m|^2)/8.
\end{equation}

Poisson summation on the sixteen cosets gives the modular transformations of the block functions.

\begin{proposition}\label{prop:modular}
The block theta functions satisfy
\begin{align}
 \Theta_{ab}(\tau+1,\bar\tau+1)
 &=e^{2\pi i\,3ab/4}\Theta_{ab}(\tau,\bar\tau),
 \label{eq:thetaT}\\
 \Theta_{ab}(-1/\tau,-1/\bar\tau)
 &=\frac{|\tau|^3}{4}\sum_{c,e\in\Z_4}
 e^{-2\pi i\,3(ae+bc)/4}\Theta_{ce}(\tau,\bar\tau).
 \label{eq:thetaS}
\end{align}
For every even self-dual gluing code $\C$, the function
$Z_{\C}$ in \eqref{eq:partition} is invariant under
$\tau\mapsto\tau+1$ and $\tau\mapsto-1/\tau$.
\end{proposition}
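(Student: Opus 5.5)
The plan is to handle the two transformations of the block functions directly from \eqref{eq:blocktheta}, then assemble them into invariance of $Z_{\C}$ through the code condition \eqref{eq:evenfinite}. Write the exponent as $\frac{|x+y|^2}{4}\tau-\frac{|x-y|^2}{4}\bar\tau$ inside $e^{2\pi i(\cdot)}$. Under $\tau\mapsto\tau+1$ the phase picks up the factor $e^{2\pi i(|x+y|^2-|x-y|^2)/4}=e^{2\pi i\,x\cdot y}$. For $x\in W_a$ and $y\in W_b$, Lemma~\ref{lem:inner} gives $x\cdot y\equiv 3ab/4\pmod\Z$, independently of the representatives. This is \eqref{eq:thetaT}.

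For \eqref{eq:thetaS} I will use Poisson summation on the rank-six lattice $R\oplus R$ and its translates. Write $\Theta_{ab}$ as the Gaussian sum over $(a\omega+b\omega')+(R\oplus R)$ in $\R^6$, where $\omega=g/2$ is the coset representative. The quadratic form $Q(x,y)=\frac{|x+y|^2}{4}\tau-\frac{|x-y|^2}{4}\bar\tau$ diagonalizes in $p_L,p_R$ as $\frac{\tau}{2}|p_L|^2-\frac{\bar\tau}{2}|p_R|^2$. The variables $p_L,p_R$ are obtained from $(x,y)$ by an orthogonal change of coordinates. So the Fourier transform of $e^{2\pi iQ(-1/\tau)}$ is $(\tau/i)^{3/2}(\bar\tau/(-i))^{3/2}e^{2\pi iQ(\tau)}$, composed with the dual variable in the same orthogonal frame. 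This contributes the prefactor $|\tau|^3$.

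Next I identify the dual lattice. With respect to the pairing $(x,y)\cdot(x',y')=x\cdot x'+y\cdot y'$, which is the one Poisson summation uses, the dual of $R\oplus R$ is $W\oplus W$. The covolume factor is $\operatorname{vol}(R)^{-2}=1/\det K=1/4$. The character attached to the translate is $e^{-2\pi i(a\omega\cdot u+b\omega\cdot v)}$ for $(u,v)\in W_c\times W_e$. The Gaussian kernel mixes $x$ with $y$ under the sign of $p_R$. This means the dual variable that appears with the correct quadratic form is the swapped vector: the $\eta$-pairing rather than the Euclidean one. Organizing the bookkeeping so that the phase becomes $e^{-2\pi i\,3(ae+bc)/4}$, with the swap $(c,e)$, is the main obstacle. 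I will verify it by checking on the phase that $\omega\cdot W_e$ contributes $3e/4$, again by Lemma~\ref{lem:inner}. If a sign or swap comes out differently, the relabeling $(c,e)\mapsto(e,c)$ of the dummy sum fixes it.

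For the invariance of $Z_{\C}$, the $T$-transformation multiplies each term of \eqref{eq:thetaC} by $e^{2\pi i\,3q_d(a,b)/4}=1$, since $q_d$ vanishes on $\C$. The factor $|\eta_D|^{6d}$ is $T$-invariant because the phases $e^{2\pi i/24}$ and its conjugate cancel. For $S$, the product of $d$ copies of \eqref{eq:thetaS} gives
\[
\Theta_{\C}(-1/\tau)=\frac{|\tau|^{3d}}{4^d}\sum_{z\in M_d}\Bigl(\sum_{c\in\C}e^{-2\pi i\,3\beta(c,z)/4}\Bigr)\prod_i\Theta_{z_i}(\tau).
\]
By the character orthogonality computation from the split-pairing subsection, the inner sum is $|\C|=4^d$ on $\C^\perp=\C$ and zero otherwise. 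Here the unit $3$ does not change the kernel. Hence $\Theta_{\C}(-1/\tau)=|\tau|^{3d}\Theta_{\C}(\tau)$. Combined with $|\eta_D(-1/\tau)|^{6d}=|\tau|^{3d}|\eta_D(\tau)|^{6d}$, this gives invariance of $Z_{\C}$ under $\tau\mapsto-1/\tau$.
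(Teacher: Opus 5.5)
Your outline follows the paper: Poisson summation over the cosets of $R\oplus R$ (covolume $4$, dual $W\oplus W$), a Gaussian factor $|\tau|^3$, then character orthogonality on $\C=\C^\perp$ together with $|\eta_D(-1/\tau)|=|\tau|^{1/2}|\eta_D(\tau)|$. The $T$-transformation and the code-level argument are fine. The gap is the step you yourself call the main obstacle, and the fix you propose does not work as stated.

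First, your claim that the correct quadratic form appears only at the swapped dual vector is mistaken. Because $(x,y)\mapsto(p_L,p_R)$ is orthogonal, the Euclidean Fourier transform of $e^{2\pi iQ(-1/\tau)}$ is $|\tau|^3e^{2\pi iQ(\tau)}$ at the same point $(u,v)$. Choosing the Euclidean or the $\eta$ pairing changes only the phase. On your Euclidean route the phase for $(u,v)\in W_c\times W_e$ is $e^{\pm2\pi i\,3(ac+be)/4}$; the standard convention gives $+$, your convention gives $-$. With the standard sign you obtain
\[
 \Theta_{ab}(-1/\tau,-1/\bar\tau)=\frac{|\tau|^3}{4}\sum_{c,e\in\Z_4}
 e^{2\pi i\,3(ac+be)/4}\Theta_{ce}(\tau,\bar\tau).
\]
This differs from \eqref{eq:thetaS} by the exchange of $c$ and $e$, and also by the sign of the phase (with your sign only the exchange remains). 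Relabeling the dummy variables $(c,e)\mapsto(e,c)$ replaces $\Theta_{ce}$ by $\Theta_{ec}$, so it proves nothing unless $\Theta_{ce}=\Theta_{ec}$. No exchange can repair a sign; that requires $\Theta_{ce}=\Theta_{-c,-e}$.

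Both symmetries hold and are immediate. The summand of \eqref{eq:blocktheta} is invariant under $(x,y)\mapsto(y,x)$ and $(x,y)\mapsto(-x,-y)$, which carry $W_c\times W_e$ onto $W_e\times W_c$ and onto $W_{-c}\times W_{-e}$. Stating them closes your argument; equivalently, substitute $(c,e)\mapsto(-e,-c)$ and use $\Theta_{-e,-c}=\Theta_{ce}$.

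The paper sidesteps these symmetries. It uses the $\eta$ pairing in the Fourier transform, so the coset phase is $\eta(\lambda_{ab},\lambda_{ce})\equiv3(ae+bc)/4$ directly. It applies Poisson summation to the $\tau$ Gaussian, which expresses $\Theta_{ab}(\tau,\bar\tau)$ through the $\Theta_{ce}(-1/\tau,-1/\bar\tau)$. It then inverts this finite transform using $\sum_{c,e}e^{2\pi i\,3((a-a')e+(b-b')c)/4}=16\delta_{aa'}\delta_{bb'}$. This produces the minus sign in \eqref{eq:thetaS} without any symmetry of the block functions.
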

\begin{proof}
For $x\in W_a$ and $y\in W_b$, the discriminant pairing gives $x\cdot y\equiv3ab/4\pmod\Z$. Since $h-\bar h=x\cdot y$, shifting both arguments by one multiplies every summand by $e^{2\pi i\,3ab/4}$. This proves \eqref{eq:thetaT}.

For the second transformation, write $\tau=\tau_1+i\tau_2$ and set
\[
 f_\tau(x,y)=\exp\bigl(-\pi\tau_2(|x|^2+|y|^2)
                       +2\pi i\tau_1x\cdot y\bigr).
\]
Use the Fourier-transform convention
\[
 \widehat f_\tau(u,v)=\int_{\R^3\oplus\R^3}
 f_\tau(x,y)e^{-2\pi i(x\cdot v+y\cdot u)}\,dx\,dy.
\]
For each coordinate pair the quadratic matrix and its inverse are
\[
 A_\tau=\begin{pmatrix}\tau_2&-i\tau_1\\-i\tau_1&\tau_2\end{pmatrix},
 \qquad
 A_\tau^{-1}=\frac1{|\tau|^2}
     \begin{pmatrix}\tau_2&i\tau_1\\i\tau_1&\tau_2\end{pmatrix}.
\]
Its real part is positive definite and $\det A_\tau=|\tau|^2$. The Gaussian integral therefore gives a factor $|\tau|^{-1}$ per pair. The inverse matrix, with the two frequency coordinates exchanged, is the quadratic matrix for $-1/\tau$. Taking all three pairs yields
\[
 \widehat f_\tau(u,v)=|\tau|^{-3}f_{-1/\tau}(u,v).
\]

Let $L=D_3\oplus D_3$. Its covolume is four and its dual for the split pairing is $L^\#=D_3^*\oplus D_3^*$. Representatives for $L^\#/L$ are $\lambda_{ab}=(ag/2,bg/2)$, with
\[
 \eta(\lambda_{ab},\lambda_{ce})
       \equiv\frac{3(ae+bc)}4\pmod\Z.
\]
Poisson summation on $L+\lambda_{ab}$ gives
\[
 \Theta_{ab}(\tau,\bar\tau)
 =\frac1{4|\tau|^3}\sum_{c,e\in\Z_4}
 e^{2\pi i\,3(ae+bc)/4}
 \Theta_{ce}(-1/\tau,-1/\bar\tau).
\]
To invert this finite transform, note that
\[
 \sum_{c,e\in\Z_4}
 e^{2\pi i\,3((a-a')e+(b-b')c)/4}
       =16\delta_{aa'}\delta_{bb'}.
\]
Each factor is a four-term geometric sum, and three is invertible modulo four. Multiplication by the transform with the opposite phase now gives exactly \eqref{eq:thetaS}, including its factor $|\tau|^3/4$.

In the full code sum \eqref{eq:thetaC}, the $T$-phase of a summand is $e^{2\pi i\,3q_d(a,b)/4}=1$. After the $S$-transformation, the coefficient of the product with label $z\in\Z_4^{2d}$ is
\[
 \frac{|\tau|^{3d}}{4^d}
 \sum_{c\in\C}e^{-2\pi i\,3\beta(c,z)/4}.
\]
This character sum equals $|\C|$ when $z\in\C^\perp$ and zero otherwise. In the latter case, translation by a codeword on which the character is nontrivial multiplies the sum by a number different from one, forcing it to vanish. Since $\C^\perp=\C$ and $|\C|=4^d$, the theta sum transforms by $|\tau|^{3d}$. The identities
\[
 |\eta_D(-1/\tau)|=|\tau|^{1/2}|\eta_D(\tau)|,
 \qquad |\eta_D(\tau+1)|=|\eta_D(\tau)|
\]
show that the denominator of \eqref{eq:partition} cancels this factor and is unchanged by $T$. Thus $Z_{\C}$ is invariant under both generators.
\end{proof}

\subsection{One-variable theta identities}
For the one-variable momentum series, write
\[
 u(t)=\sum_{x\in W_0}t^{x^2/2},\quad
 v(t)=\sum_{x\in W_2}t^{x^2/2},\quad
 w(t)=\sum_{x\in W_1}t^{x^2/2}=\sum_{x\in W_3}t^{x^2/2}.
\]
The equality follows from $W_3=-W_1$. The first terms are
\begin{align}\label{eq:oneblock}
 u&=1+12t+6t^2+O(t^3),\notag\\
 v&=6t^{1/2}+8t^{3/2}+O(t^{5/2}),\\
 w&=4t^{3/8}+12t^{11/8}+O(t^{19/8}).\notag
\end{align}
For $u$, the first terms count the zero vector, the twelve vectors $\pm e_i\pm e_j$, and the six vectors $\pm2e_i$. Every vector of $W_0=D_3$ has even squared norm, so the next possible exponent is three. The coset $W_2$ consists of the integer triples with odd coordinate sum. Their squared norms are odd; at norms one and three the triples are the six unit vectors and the eight choices $(\pm1,\pm1,\pm1)$. The next possible squared norm is five.

In the half-integral cosets, the shortest triples have all coordinate absolute values $1/2$. Of the eight sign choices, four have each label, by \eqref{eq:blocklabel}; their squared norm is $3/4$. At the next squared norm $11/4$, exactly one coordinate has absolute value $3/2$. There are $3\cdot8$ such triples, divided equally between the two labels by negation. The next possible norm is $19/4$, obtained by enlarging two coordinates to absolute value $3/2$. This proves \eqref{eq:oneblock}, including the remainder bounds.

Put $\vartheta_0=u$, $\vartheta_2=v$ and $\vartheta_1=\vartheta_3=w$. Since
$\Delta=(x^2+y^2)/2$, the coset decomposition gives the exact one-variable identity
\begin{equation}\label{eq:univcode}
 \Theta_{\C}(t)=\sum_{c=(c_1,\ldots,c_{2d})\in\C}
 \prod_{j=1}^{2d}\vartheta_{c_j}(t).
\end{equation}
For one pair, the codes $\C_x$ and $\C_2$ give
\begin{equation}\label{eq:onedtheta}
 \Theta_{\mathrm F}=u(u+v+2w),\qquad
 \Theta_{\mathrm T}=(u+v)^2.
\end{equation}
For two pairs, \eqref{eq:univcode} gives the polynomials in
Table~\ref{tab:thetapolys}.

The words of $\mathrm F_0$ are $(0,0,s,t)$, with $s,t\in\Z_4$. The two zero labels contribute $u^2$, while each independent label contributes $u+v+2w$. For $\mathrm F_1$ the words are $(t,3t,s,s)$. Each of its two label pairs contributes $u^2$ at parameter zero, $v^2$ at parameter two, and $2w^2$ from the two odd parameters. Thus its theta polynomial is $(u^2+v^2+2w^2)^2$. The words of $\mathrm F_2$ are $(2t,t,s,2s)$. For either pair, the four parameter values contribute $u^2$, $vw$, $uv$, and $vw$, respectively, giving $(u^2+uv+2vw)^2$.

For $\mathrm M_0$ the words are $(2a,0,2b,c)$, with $a,b\in\Z_2$ and $c\in\Z_4$. Summing over these three parameters gives $u(u+v)^2(u+v+2w)$. For $\mathrm M_1$ they are $(2a,2a,c,c+2b)$. The first pair contributes $u^2+v^2$. When $c$ is even, the second pair runs through all four zero-or-two pairs and contributes $(u+v)^2$; when $c$ is odd, the four pairs contribute $4w^2$.

Write a word of $\mathrm M_2$ as
\[
 (k,k+2b,k+2a,3k+2a+2b),\qquad
 k\in\Z_4,\quad a,b\in\Z_2.
\]
For $k=2c$ the word divided by two is
$(c,c+b,c+a,c+a+b)\in\F_2^4$. These are the eight binary words of even weight, so the even words contribute $u^4+6u^2v^2+v^4$. The other eight words have all four labels odd and contribute $8w^4$. Finally the four coordinates of $\mathrm T$ range independently over $\{0,2\}$, giving $(u+v)^4$.
\begin{table}[ht]
\centering
\begin{tabular}{c l}
\toprule
Class & exact one-variable theta series\\\midrule
$\mathrm F_0$ & $u^2(u+v+2w)^2$\\
$\mathrm F_1$ & $(u^2+v^2+2w^2)^2$\\
$\mathrm F_2$ & $(u^2+uv+2vw)^2$\\
$\mathrm M_0$ & $u(u+v)^2(u+v+2w)$\\
$\mathrm M_1$ & $(u^2+v^2)((u+v)^2+4w^2)$\\
$\mathrm M_2$ & $u^4+6u^2v^2+v^4+8w^4$\\
$\mathrm T$ & $(u+v)^4$\\\bottomrule
\end{tabular}
\caption{Theta identities for the representatives in Table~\ref{tab:d2}.}
\label{tab:thetapolys}
\end{table}

\subsection{The one-pair spectra}
The two series in \eqref{eq:onedtheta} begin
\begin{align}
\Theta_{\mathrm F}
&=1+8t^{3/8}+6t^{1/2}+24t+120t^{11/8}+80t^{3/2}+156t^2+O(t^{17/8}),\label{eq:d1free}\\
\Theta_{\mathrm T}
&=1+12t^{1/2}+60t+160t^{3/2}+252t^2+O(t^{5/2}).\label{eq:d1tors}
\end{align}
Substitution of \eqref{eq:oneblock} into \eqref{eq:onedtheta} gives
these series. Their distinct momentum-primary gaps, $3/8$ and $1/2$,
exclude T-duality between the two backgrounds. The torsion class is the square point $(G,B)=(I_3,0)$ of Section~\ref{sec:GB}, while the free class has representative $(G_A,0)$.

\subsection{The two-pair spectra}
Substitution of \eqref{eq:oneblock} into Table~\ref{tab:thetapolys} gives
Table~\ref{tab:spectra}. Its coefficients count momentum states
through $\Delta=2$, without oscillator excitations. The oscillator
contribution is included in \eqref{eq:partition}.

\begin{table}[ht]
\centering
\scriptsize
\renewcommand{\arraystretch}{1.15}
\begin{tabular}{c c p{9.7cm}}
\toprule
Class & gap & $\Theta_{\C}(t)$ through $\Delta=2$\\
\midrule
$\mathrm F_0$ & $3/8$ & $1+16t^{3/8}+12t^{1/2}+64t^{3/4}+96t^{7/8}+84t+624t^{11/8}+448t^{3/2}+1920t^{7/4}+2720t^{15/8}+1848t^2$\\
$\mathrm F_1$ & $3/4$ & $1+64t^{3/4}+120t+1024t^{3/2}+4224t^{7/4}+4104t^2$\\
$\mathrm F_2$ & $1/2$ & $1+12t^{1/2}+96t^{7/8}+84t+576t^{11/8}+448t^{3/2}+2304t^{7/4}+2720t^{15/8}+1848t^2$\\
$\mathrm M_0$ & $3/8$ & $1+8t^{3/8}+18t^{1/2}+96t^{7/8}+156t+600t^{11/8}+888t^{3/2}+2720t^{15/8}+3768t^2$\\
$\mathrm M_1$ & $1/2$ & $1+12t^{1/2}+64t^{3/4}+120t+880t^{3/2}+4224t^{7/4}+4104t^2$\\
$\mathrm M_2$ & $1$ & $1+264t+2048t^{3/2}+7944t^2$\\
$\mathrm T$ & $1/2$ & $1+24t^{1/2}+264t+1760t^{3/2}+7944t^2$\\
\bottomrule
\end{tabular}
\caption{Low-lying momentum-primary spectra for the seven $d=2$ classes.  The variable $t$ records $\Delta=h+\bar h$.}
\label{tab:spectra}
\end{table}

The lowest nonconstant terms of $\Theta_{\C}(q,\bar q)$ are
\begin{center}
\begin{tabular}{c c}
\toprule
Class & first bivariate shell\\
\midrule
$\mathrm F_0$ & $16q^{3/16}\bar q^{3/16}$\\
$\mathrm F_1$ & $64q^{3/8}\bar q^{3/8}$\\
$\mathrm F_2$ & $12q^{1/4}\bar q^{1/4}$\\
$\mathrm M_0$ & $8q^{3/16}\bar q^{3/16}$\\
$\mathrm M_1$ & $12q^{1/4}\bar q^{1/4}$\\
$\mathrm M_2$ & $12q+240q^{1/2}\bar q^{1/2}+12\bar q$\\
$\mathrm T$ & $24q^{1/4}\bar q^{1/4}$\\
\bottomrule
\end{tabular}
\end{center}
Evenness of the Lorentzian lattice determines the spins in these first shells:
\[
 h-\bar h=\eta(z,z)/2\in\Z.
\]
Since $h,\bar h\ge0$, a momentum state of dimension $\Delta<1$ must have spin zero, and hence $h=\bar h=\Delta/2$. This determines the first shells of every class except $\mathrm M_2$ directly from Table~\ref{tab:spectra}. For $\mathrm M_2$ the gap is one and the possible spins are $-1,0,1$. The current calculation in Proposition~\ref{prop:currents} gives twelve states of each nonzero spin; subtracting these from the total coefficient $264$ leaves $240$ states at $(h,\bar h)=(1/2,1/2)$.

\subsection{Currents and equivalence classes}
\begin{proposition}\label{prop:currents}

For every $d$, put
\[
 t_i=(2e_i,2e_i)\in M_d,\qquad
 m(\C)=\#\{i\in\{1,\ldots,d\}:t_i\in\C\},
\]
where $e_i$ is the $i$th coordinate vector of $\Z_4^d$.
The left and right chiral root systems are both $A_1^{3m(\C)}$.
The finite-dimensional weight-one Lie algebra on each side is
$\mathfrak{su}(2)^{\oplus3m(\C)}\oplus
\mathfrak u(1)^{\oplus(3d-3m(\C))}$; its currents generate the affine algebra
\begin{equation}\label{eq:allcurrents}
 \widehat{\mathfrak{su}}(2)_1^{\,\oplus3m(\C)}
 \ \oplus\ \widehat{\mathfrak u}(1)^{\,\oplus(3d-3m(\C))}.
\end{equation}
In particular no member of this fixed-metric split family has an
$\mathfrak{su}(4)$ current algebra.

For $d=1$, the free class has no chiral norm-two momentum vectors, whereas the torsion class has root system $A_1^3$ on each side. For $d=2$, the chiral root systems are
\[
\begin{array}{c|c|c}
\text{class}&\text{number of roots on each side}&\text{root system}\\\hline
\mathrm F_0,\mathrm F_1,\mathrm F_2&0&\varnothing\\
\mathrm M_0&6&A_1^3\\
\mathrm M_1&0&\varnothing\\
\mathrm M_2,\mathrm T&12&A_1^6
\end{array}
\]
\end{proposition}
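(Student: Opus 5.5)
The chiral roots can be characterized directly inside $\Lam_{A_3}(\C)$, using only the minimal norms of the four cosets $W_a$ recorded in \eqref{eq:oneblock}.

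A left-moving weight-one primary without oscillators has $\bar h=0$ and $h=1$. That is, $p_R=0$ and $\|p_L\|^2=2$. Since $p_R=(x-y)/\sqrt2$, this forces $y=x$. Then $p_L=\sqrt2\,x$, so the condition becomes $|x|^2=1$ with $x\in W^d$ and label $([x],[x])\in\C$. Symmetrically, right-moving roots are $(x,-x)$ with $|x|^2=1$ and label $([x],-[x])\in\C$.

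The main step is to list the vectors of $W^d$ of squared norm one. Write $x=(x_1,\ldots,x_d)$ with $x_i\in W$. By \eqref{eq:oneblock}, a nonzero block has squared norm at least $2$ in $W_0$, at least $1$ in $W_2$, and at least $3/4$ in $W_1$ or $W_3$. Two nonzero blocks already contribute at least $3/2>1$, and the next odd-coset norm is $11/4$. Hence exactly one block $x_i$ is nonzero, and it must be one of the six unit vectors $\pm e_j$ of $W_2$.

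Its label is $2e_i$ in both halves. Because $-2=2$ in $\Z_4$, both the left and the right conditions reduce to $t_i=(2e_i,2e_i)\in\C$. Each admissible block therefore gives the six vectors $\pm e_j$, $j=1,2,3$, on each side. These are mutually orthogonal up to sign, so the root system on each side is $A_1^{3m(\C)}$. In particular it is never $A_3$, which rules out $\mathfrak{su}(4)$.

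The weight-one currents on each side are the $3d$ oscillator modes together with the momentum states just found. By the standard Frenkel--Kac vertex-operator construction, simply-laced norm-two roots give level-one affine algebras. Each $A_1$ pair therefore enhances one Cartan direction to $\widehat{\mathfrak{su}}(2)_1$. The orthogonal complement of the root span has dimension $3d-3m$ and contributes $\widehat{\mathfrak u}(1)$ factors, which gives \eqref{eq:allcurrents}.

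For the tables it remains to test $t_i\in\C$ on each representative.

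\begin{itemize}
\item For $d=1$: $(2,2)\in\C_2$, but $(2,2)\notin\C_x$, so the torsion class has $m=1$ and the free class has $m=0$.
\item $\mathrm F_0$ contains only words with $a=0$, so $m=0$.
\item $\mathrm F_1$ has words $(t,3t,s,s)$. The word $2020$ would need $t=2$ and $3t=0$, which is impossible; $0202$ fails similarly. So $m=0$.
\item $\mathrm F_2$ has words $(2t,t,s,2s)$ and fails in the same way, so $m=0$.
\item $\mathrm M_0$: $2020=2000+0020\in\C$, but no word has $a_2=2$, so $0202\notin\C$. Hence $m=1$.
\item $\mathrm M_1$ has words $(2a,2a,c,c+2b)$. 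Neither $2020$ nor $0202$ has equal first two entries, so $m=0$.
\item $\mathrm M_2$: $2020=2(1113)+0202$ and $0202$ is a generator, so $m=2$.
\item $\mathrm T$ contains both words, so $m=2$.
\end{itemize}

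These give $6m$ roots per side, as listed in the table. For $\mathrm M_2$, the twelve states of each nonzero spin agree with the count used in the first-shell analysis.

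The hard step is the exhaustive norm-one enumeration. I expect it to be straightforward once the coset minima from \eqref{eq:oneblock} are invoked.
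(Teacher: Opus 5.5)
Your proposal is correct and follows essentially the same route as the paper: you force $p_R=0$, show that the only momenta with $|x|^2=1$ are the unit vectors $\pm e_j$ in a single $W_2$ block, read off the label $t_i$ (using $-2=2$ for the right movers), and test $t_1,t_2$ on each representative. The only variation is that you rule out other norm-one vectors using the coset minima in \eqref{eq:oneblock}. The paper instead argues in numerator coordinates that an integer vector of squared norm four cannot have four odd entries, because odd entries occur in whole blocks of three.
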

\begin{proof}
For a state of weights $(1,0)$, the formulas
$h=|p_L|^2/2+N$ and $\bar h=|p_R|^2/2+\bar N$, with
$N,\bar N\in\Z_{\ge0}$, force $p_R=0$ and $\bar N=0$.
Either $N=1$ and the momentum vanishes, giving one of the $3d$
oscillator currents, or $N=0$ and $|p_L|^2=2$.

A left-chiral momentum vector satisfies $p_R=0$, so $n=m$ in \eqref{eq:hexponents}. Its squared left norm is $|n|^2/2$, so a root has $|n|^2=4$. Each length-three block of $n$ has constant parity because its residue lies in $E^\perp$. An integer vector of squared norm four is either $\pm2e_j$ or has exactly four entries of absolute value one. In the second case all odd entries would have to occur in whole blocks of length three. Their number would be a multiple of three, not four, which is impossible. Thus every root comes from $n=\pm2e_j$, for every $d$.

The occupied block has label two by \eqref{eq:blocklabel}, and all other blocks have label zero. Thus an admitted word $t_i$ gives six roots in three orthogonal directions, yielding $3m(\C)$ factors of $A_1$. Their Cartan currents occupy $3m(\C)$ oscillator directions; the others remain abelian. No $\mathfrak{su}(4)$ embeds in a sum of $\mathfrak{su}(2)$ and abelian factors. For one pair the required word is $22$, which belongs to $\C_2$ and to neither $\C_x$ nor $\C_y$.

For two pairs the required words are $2020$ and $0202$. The free graph codes admit neither: for $a=(2,0)$ or $(0,2)$, the nonzero entry of $bJa$, if any, is in the other coordinate. For $\mathrm M_0$, its word description $(2a,0,2b,c)$ admits $2020$ but excludes $0202$. For $\mathrm M_1$, equality of its first two coordinates excludes both. For $\mathrm M_2$, both are even words with two entries equal to two and hence belong to its even-word subgroup described above. They also belong to $\mathrm T$. This proves the table of left-moving root counts. On the right one has $n=-m$, but negation does not change label two; the same tests and counts apply.

For these roots $p_L=n/\sqrt2=\pm\sqrt2e_j$, or the corresponding right-moving vector. With the usual free-boson normalization $\phi_j(z)\phi_k(w)\sim-\delta_{jk}\log(z-w)$, the fields
\[
 H_j=\frac{i}{\sqrt2}\partial\phi_j,
 \qquad E_j^\pm={:}e^{\pm i\sqrt2\phi_j}{:}
\]
satisfy $H_j(z)H_j(w)\sim1/(2(z-w)^2)$, $H_j(z)E_j^\pm(w)\sim\pm E_j^\pm(w)/(z-w)$, and
\[
 E_j^+(z)E_j^-(w)\sim\frac1{(z-w)^2}+\frac{2H_j(w)}{z-w}.
\]
These are the level-one $\widehat{\mathfrak{su}}(2)$ relations, after the standard normalization of lattice-vertex cocycles. Products from different directions have no singular cross terms. The remaining oscillator directions give the abelian summands in \eqref{eq:allcurrents}; a zero entry in the table excludes only additional momentum currents.
\end{proof}

The gluing lattice $A_3$ is not the chiral momentum lattice
$\{\sqrt2x:(x,x)\in\Lam\}$. Indeed, for a root $\alpha\in R$,
the vector $(\alpha,0)$ has
\[
 p_L=p_R=\alpha/\sqrt2,\qquad (h,\bar h)=(1/2,1/2).
\]
This is a nonchiral scalar of dimension one. Proposition~\ref{prop:currents}
excludes the undeformed level-one $\mathrm{SU}(4)^d$ WZW backgrounds
at the same central charge, whose $A_3$ roots are chiral \cite{Angelinos}.
The statement concerns the fixed positive form used here and does not
exclude reaching other enhanced points by continuous Narain deformation.

\begin{theorem}[T-duality classes in the two-pair family]\label{prop:d2Tduality}
Among the Narain backgrounds arising from the split $d=2$ $A_3$ gluing construction, there are exactly seven equivalence classes under $O(6,6;\Z)$. Representatives are the seven codes in Table~\ref{tab:d2}.
\end{theorem}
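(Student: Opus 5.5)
The proof has two halves: an upper bound of seven classes and a lower bound of seven. For the upper bound I invoke Theorem~\ref{thm:d2class}. Every even self-dual split code in $\Z_4^4$ lies in one of the seven block-monomial orbits of Table~\ref{tab:d2}. As noted at the end of Section~\ref{sec:fixed}, each block-monomial operation induces a background equivalence:
\begin{itemize}[label=--]
\item a pair permutation permutes the $A_3$ blocks in both null halves;
\item an exchange $a_i\leftrightarrow b_i$ swaps $x_i$ and $y_i$, which preserves $\eta$ and $\HH$;
\item a sign change negates both $x_i$ and $y_i$, which is a Euclidean isometry on each side.
\end{itemize}
Each of these is a lattice isomorphism preserving both forms. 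In the integral null bases of Section~\ref{sec:GB} it is therefore represented by an element of $O(6,6;\Z)$ carrying one positive Gram matrix to the other. So the twenty-two backgrounds fall into at most seven $O(6,6;\Z)$-classes, represented by the listed codes.

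For the lower bound, I use that an $O(6,6;\Z)$ equivalence preserves both $\eta$ and $\HH$. Hence it preserves the full bivariate theta series $\Theta_{\C}(q,\bar q)$ and its one-variable specialization $\Theta_{\C}(t)$, which counts lattice vectors by $\Delta=\HH/2$. It therefore suffices to show that the seven series of Table~\ref{tab:spectra} are pairwise distinct. These series come from substituting \eqref{eq:oneblock} into Table~\ref{tab:thetapolys}, using the one-variable identity \eqref{eq:univcode}.

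The gaps split the classes into groups:
\begin{itemize}[label=--]
\item gap $3/8$: $\mathrm F_0$ and $\mathrm M_0$, separated by their coefficients $16$ and $8$ at $t^{3/8}$;
\item gap $1/2$: $\mathrm F_2$, $\mathrm M_1$ and $\mathrm T$, where $\mathrm T$ has coefficient $24$ at $t^{1/2}$ against $12$ for the other two, and $\mathrm F_2$, $\mathrm M_1$ are separated by the $t^{3/4}$ coefficient ($0$ versus $64$);
\item gap $3/4$: $\mathrm F_1$ alone;
\item gap $1$: $\mathrm M_2$ alone.
\end{itemize}
As a cross-check, the chiral root counts of Proposition~\ref{prop:currents} are also T-duality invariants. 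They separate $\mathrm M_1$ (no roots) from $\mathrm M_2$ and $\mathrm T$ (twelve roots each) by an independent argument.

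The main obstacle is making sure the table coefficients used above are correct. I would recheck only the few coefficients actually needed, by expanding the relevant polynomials to low order:
\begin{itemize}[label=--]
\item for $\mathrm F_0$, the $t^{3/8}$ term of $u^2(u+v+2w)^2$ is $2\cdot 2\cdot 4=16$;
\item for $\mathrm M_0$, the $t^{3/8}$ term of $u(u+v)^2(u+v+2w)$ is $8$;
\item for $\mathrm M_1$, the $t^{3/4}$ term is $4\cdot 16=64$ from $4w^2$;
\item for $\mathrm F_2$, the expansion of $(u^2+uv+2vw)^2$ has no $t^{3/4}$ term, because $vw$ begins at $t^{7/8}$.
\end{itemize}
The gaps themselves follow from the lowest exponents of $u$, $v$ and $w$. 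This is routine, and it finishes the proof that the seven classes are pairwise inequivalent.
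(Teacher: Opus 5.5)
Your proposal is correct and follows the paper's proof essentially step for step. The upper bound comes from block-monomial operations inducing form-preserving lattice isomorphisms, hence $O(6,6;\Z)$ equivalences, together with Theorem~\ref{thm:d2class}; the lower bound comes from the one-variable theta series of Table~\ref{tab:spectra}, separated by the same gap and coefficient comparisons the paper uses ($16$ vs.\ $8$ at $t^{3/8}$, $24$ vs.\ $12$ at $t^{1/2}$, $0$ vs.\ $64$ at $t^{3/4}$), and your spot-checks of those coefficients are right, while the chiral-root cross-check via Proposition~\ref{prop:currents} is valid but not needed.
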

\begin{proof}
Each generator of block-monomial equivalence is induced by a real linear isometry preserving both forms in \eqref{eq:forms}. A pair permutation permutes the corresponding $x$- and $y$-blocks together; a sign change negates both blocks; and an exchange interchanges $x_i$ with $y_i$. These maps preserve $R$ and $W$ in each block and carry the gluing condition to that of the transformed code. They therefore restrict to bijective lattice maps of the full Narain backgrounds. Relative to integral null bases in the source and target, such a map and its inverse have integral matrices, and preservation of $\eta$ gives $S^T\eta_0S=\eta_0$. Hence the coordinate matrix belongs to $O(6,6;\Z)$, with the positive form carried to the target positive form. The seven code orbits thus give at most seven Narain classes, and Theorem~\ref{thm:d2class} proves that every background in this family is represented.

Conversely, preservation of $\eta$ and $\HH$ preserves
\[
 h=\frac{\HH(z,z)+\eta(z,z)}4,
 \qquad \bar h=\frac{\HH(z,z)-\eta(z,z)}4
\]
for each momentum vector. In particular it preserves the one-variable theta series. Table~\ref{tab:spectra} distinguishes all seven representatives. The only repeated gaps are $3/8$ and $1/2$: at gap $3/8$, the leading coefficients of $\mathrm F_0$ and $\mathrm M_0$ are $16$ and $8$; at gap $1/2$, $\mathrm T$ has leading coefficient $24$, while $\mathrm F_2$ and $\mathrm M_1$ have coefficient $12$ but differ at exponent $3/4$, with coefficients $0$ and $64$. The gaps of $\mathrm F_1$ and $\mathrm M_2$ are different from these and from one another. Thus no two representatives are T-dual.
\end{proof}

Equality of spectra need not imply T-duality; isospectral inequivalent
code theories occur in \cite{DymarskyShapere}.

\section{Finite shift orbifolds}\label{sec:orbifolds}

\subsection{A pair of gluing codes}
A shift symmetry acts on momentum vertex operators by a character and
fixes the oscillator currents. The lattice formulation, including
twisted-sector phases, is given in
\cite[Section~3]{IsraelMelnikovProto}.

\begin{proposition}\label{prop:orbifold}
Let $\C,\C'\subset M_d$ be even self-dual codes and put
$I=\C\cap\C'$. There is a finite shift orbifold from the background
of $\C$ to that of $\C'$, with gauged group
\begin{equation}\label{eq:gaugegroup}
 G=\C'/I\simeq\operatorname{Hom}(\C/I,U(1)).
\end{equation}
For $v\in\C'$, the corresponding symmetry on a state with code label
$c\in\C$ is
\begin{equation}\label{eq:finitecharacter}
 \chi_v(c)=\exp\!\left(\frac{2\pi i}{4}\,3\beta(c,v)\right).
\end{equation}
The target code specifies the projected twisted sectors as well as
this action on the original theory.
\end{proposition}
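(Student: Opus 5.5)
The plan is to treat this as the standard lattice statement that two even self-dual lattices sharing a common sublattice $N$ are related by a shift orbifold by the finite group $N^*/N$-type quotient, and to transport it to codes through the label map $\pi$. Put $\Lam=\Lam_{A_3}(\C)$, $\Lam'=\Lam_{A_3}(\C')$ and $N=\pi^{-1}(I)=\Lam\cap\Lam'$. Since $I\subset\C$ and $\C$ is self-dual for $\beta$, we have $I^\perp\supseteq\C+\C'$. The congruence $\eta(z,z')\equiv\tfrac34\beta(c,c')\pmod\Z$ from Section~\ref{sec:fixed}, together with the fact that the $\eta$-dual of any $\pi^{-1}(X)$ lies in $W^d\oplus W^d$, gives $N^*=\pi^{-1}(I^\perp)$.

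\emph{Step 1: the finite algebra.} I will show $I^\perp=\C+\C'$ and identify $G$. By nondegeneracy of $\beta$, $|I^\perp|=4^{2d}/|I|$, while $|\C+\C'|=|\C||\C'|/|I|=4^{2d}/|I|$. This gives equality of orders, and hence $I^\perp=\C+\C'$. The map $v\mapsto\chi_v$ sends $\C'\to\operatorname{Hom}(\C,U(1))$. It is well defined modulo four because $\beta$ is $\Z_4$-valued, and its kernel is $\C'\cap\C^\perp=\C'\cap\C=I$. Since $\beta(c,v)=0$ for $c\in I$ and $v\in\C'$, the image consists of characters trivial on $I$. Counting orders, $|\C'/I|=|\C/I|$, so the image is all of $\operatorname{Hom}(\C/I,U(1))$. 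This proves \eqref{eq:gaugegroup}.

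\emph{Step 2: the symmetry.} For $v\in\C'$ choose a lift $\lambda_v\in\pi^{-1}(v)$ and let $g_v$ act on a momentum state $z\in\Lam$ by $e^{2\pi i\eta(z,\lambda_v)}$, fixing the oscillators. By the congruence above this phase equals $\chi_v(\pi(z))$, which is \eqref{eq:finitecharacter}; it depends only on $v$ modulo $I$ once restricted to $\Lam$. Such a phase is a symmetry of the lattice CFT because it is a character of the momentum lattice, compatible with the vertex-operator OPE up to cocycles, as in \cite[Section~3]{IsraelMelnikovProto}.

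\emph{Step 3: the orbifold spectrum.} The untwisted projected sector is $\{z\in\Lam:\chi_v(\pi(z))=1\ \forall v\in\C'\}=\pi^{-1}(\C\cap\C'^{\perp})=N$. The $v$-twisted sector carries momenta in $\lambda_v+\Lam$, that is, labels in $v+\C$. After projection these are $\lambda_v+N$, and in the standard shift-orbifold bookkeeping the retained twisted coset is exactly $\pi^{-1}(v+I)$. The union over $v\in\C'/I$ is $\pi^{-1}(\C')=\Lam'$. I expect this step to be the main obstacle, specifically the consistency of the twisted-sector phases: one must check level matching, namely that $q_d(v)=0$ for $v\in\C'$ so that $\tfrac12\eta(\lambda_v,\lambda_v)\in\Z$ mod the appropriate shift. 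One must also check that the discrete torsion choice is the one selecting $I$-cosets rather than another Lagrangian in $(\C+\C')/I$.

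To handle that obstacle, I would first observe that $N^*/N\simeq I^\perp/I=(\C+\C')/I\simeq(\C/I)\oplus(\C'/I)$ carries the induced quadratic form $q_d$. Both $\C/I$ and $\C'/I$ are Lagrangian in it, since they are even and of order $\sqrt{|I^\perp/I|}$. Choosing the target Lagrangian $\C'/I$ is then exactly the specification of projected twisted sectors. Finally, I would verify modular invariance of the result by applying Proposition~\ref{prop:modular} to $\C'$. This shows that the target code, and not only $G$, determines the twisted-sector projections, as claimed.
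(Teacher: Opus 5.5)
\textbf{Verdict: genuine gap in Step~3.} Your Steps~1 and~2 match the paper. The pairing $3\beta$ has kernels $\C\cap(\C')^\perp=I$ and $\C'\cap\C^\perp=I$, which gives $\C'/I\simeq\operatorname{Hom}(\C/I,U(1))$. A lift $\lambda_v\in\Lam'$ then realizes $\chi_v$ on $\Lam$ as $e^{2\pi i\eta(z,\lambda_v)}$.

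The problem is that you never say how $g_u$ acts on the twisted sector $\lambda_v+\Lam$. Your claim that the projection there leaves $\lambda_v+N$ depends entirely on that choice. ``Standard shift-orbifold bookkeeping'' does not fix it: Remark~\ref{rem:torsionexample} exhibits one untwisted $\Z_2^2$ action whose two twisted-sector extensions yield $\mathrm M_1$ and $\mathrm M_2$.

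Your Lagrangian paragraph does not close this. It is true that $\C'/I$ is a Lagrangian in $(\C+\C')/I$ complementary to $\C/I$, and this classifies the possible outcomes. But it does not construct twisted-sector phases that realize this particular outcome. Saying the choice of Lagrangian ``is'' the specification of twisted sectors restates what is to be proved. Likewise, applying Proposition~\ref{prop:modular} to $\C'$ only shows that $Z_{\C'}$ is modular invariant, which already follows from $\Lam'$ being even unimodular. It does not show that the sectors form a consistent $G$-orbifold of the $\C$ theory.

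The missing idea is to use your lifts in the twisted sectors as well, pairing with the full shifted momentum:
\begin{itemize}
\item For $u,t\in\Lam'$ and $z\in\Lam$, set $g_uV_{z+t}=e^{2\pi i\eta(u,z+t)}V_{z+t}$.
\item This is unchanged when $u$ is moved by an element of $N=\Lam\cap\Lam'$. It is also unchanged when $t$ is moved by $l\in N$ and $z$ by $-l$. It therefore gives representations of $\Lam'/N$.
\item Integrality of $\Lam'$ gives $\eta(u,t)\in\Z$. Hence the $t$-sector projection keeps exactly $z\in\Lam\cap(\Lam')^*=N$. The surviving sectors are $N+t$, and their union is $\Lam'$.
\item Build the twisted traces $Z_{t,u}$ from $\sum_{p\in\Lam+t}e^{2\pi i\eta(u,p)}q^{|p_L|^2/2}\bar q^{|p_R|^2/2}$. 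By evenness of $\Lam$ and Poisson summation on the unimodular $\Lam$, they satisfy $Z_{t,u}(\tau+1)=e^{-\pi i\eta(t,t)}Z_{t,u+t}$ and $Z_{t,u}(-1/\tau)=e^{2\pi i\eta(t,u)}Z_{u,-t}$.
\item Both phases are trivial precisely because $\Lam'$ is even and integral. Averaging over $t,u$ then gives $Z_{\C'}$.
\end{itemize}
This is where the target code, and not just $G$, fixes the twisted-sector data.
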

\begin{proof}
The pairing in \eqref{eq:finitecharacter} is trivial on $I$ in either
argument. Its kernel on $\C$ is $\C\cap(\C')^\perp=I$, and its kernel
on $\C'$ is $\C'\cap\C^\perp=I$. Since $|\C|=|\C'|$, it induces a
perfect pairing between their quotients. Thus the abstract groups
$\C/I$ and $\C'/I$ are isomorphic, but the acting group is naturally
the character dual of $\C/I$.

Write $\Lam=\Lam_{A_3}(\C)$, $\Lam'=\Lam_{A_3}(\C')$, and
$L=\Lam\cap\Lam'$. A representative $u\in\Lam'$ of a class in
$\Lam'/L$ acts by $\exp(2\pi i\eta(u,z))$ on $z\in\Lam$.
The common invariant sublattice of these characters is $L$. For a sector represented by
$t\in\Lam'/L$, take the shifted momenta $z+t$, with $z\in\Lam$, and
the action
\[
 g_u V_{z+t}=e^{2\pi i\eta(u,z+t)}V_{z+t}.
\]
If $u$ is changed by $l\in L$, its phase is unchanged: both
$\eta(l,z)$ and $\eta(l,t)$ are integers. The $t$-representative may
also be changed by $l\in L$: replacing $t$ by
$t+l$ and $z$ by $z-l$ leaves the momentum $z+t$ unchanged. If
$Nu\in L$, the symmetry $g_u$ has order dividing $N$ on every twisted
sector, by the same integrality argument. The actions therefore give
representations of $\Lam'/L$, and their phases multiply under addition
of momenta.

Since $\Lam'$ is integral, $\eta(u,t)\in\Z$. Projection in the
$t$-sector retains exactly those $z\in\Lam$ that pair integrally with
all of $\Lam'$, namely
$z\in\Lam\cap(\Lam')^*=L$. Thus the projected sector is $L+t$,
and the union of these sectors is $\Lam'$. Its evenness gives integral
spin, and its usual lattice cocycles give mutually local vertex
operators extending those on the common sublattice $L$.

Define the twisted traces, suppressing the $\bar\tau$ argument, by
\[
 Z_{t,u}(\tau)=\frac1{|\eta_D(\tau)|^{6d}}
 \sum_{p\in\Lam+t}e^{2\pi i\eta(u,p)}
 q^{|p_L|^2/2}\bar q^{|p_R|^2/2}.
\]
Evenness of $\Lam$ and Poisson summation on the unimodular lattice
$\Lam$, respectively, give
\begin{align*}
 Z_{t,u}(\tau+1)&=e^{-\pi i\eta(t,t)}Z_{t,u+t}(\tau)
                  =Z_{t,u+t}(\tau),\\
 Z_{t,u}(-1/\tau)&=e^{2\pi i\eta(t,u)}Z_{u,-t}(\tau)
                  =Z_{u,-t}(\tau).
\end{align*}
For the first identity write $p=z+t$ and use
$\eta(p,p)/2\equiv\eta(t,p)-\eta(t,t)/2\pmod\Z$.
For the second, the Fourier transform shifts the dual sum by $-u$;
its Gaussian factor $|\tau|^{3d}$ cancels that from the oscillator
denominator, as in Proposition~\ref{prop:modular}. Reflection of the
summation variable replaces $(-u,t)$ by $(u,-t)$. Both displayed
phases are one because $t,u\in\Lam'$ and $\Lam'$ is even integral.
Character orthogonality gives
\begin{equation}\label{eq:orbifoldsum}
 \frac1{|G|}\sum_{t,u\in\Lam'/L}Z_{t,u}
 =\frac1{|\eta_D(\tau)|^{6d}}
   \sum_{p\in\Lam'}q^{|p_L|^2/2}\bar q^{|p_R|^2/2}
 =Z_{\C'}.
\end{equation}
\end{proof}

An abstract group and its untwisted characters alone need not specify
an orbifold: the action in the twisted sectors, including any discrete
torsion, also matters. The lifts in $\Lam'$ in the proof fix this data.
Reversing the two codes gives the inverse relation through the quantum
symmetry of the orbifold.

\subsection{Momentum and winding phases}
In the convention \eqref{eq:narainform}, a shift vector $t=(r,s)$ acts as
\begin{equation}\label{eq:shiftcharges}
 \chi_t(k,\ell)=e^{2\pi i(\alpha^Tk+\delta^T\ell)},\qquad
 \alpha=\gamma^{-1}s,\quad \delta=\gamma^Tr-B\alpha.
\end{equation}
Regard $\alpha,\delta$ as elements of $\R^n/\Z^n$. For a nontrivial
character, the shift type is
\[
\begin{array}{c|c|c}
\text{type}&\text{condition}&\text{translation}\\\hline
\mathrm P&\delta=0&\text{torus}\\
\mathrm W&\alpha=0&\text{dual torus}\\
\mathrm M&\alpha\ne0,\ \delta\ne0&\text{mixed}
\end{array}
\]
These conditions depend on the null polarization.

Table~\ref{tab:orbifolds} lists the directed gaugings between the fixed
representatives of Table~\ref{tab:d2}. The words $v_j$ generate $\C'/I$
and act by \eqref{eq:finitecharacter}. Types are listed in generator
order, using the product frames of Proposition~\ref{prop:essentialB}
for the first six source classes and \eqref{eq:GM}--\eqref{eq:BM} for
$\mathrm M_2$. In a product frame, dependence on $a$ gives momentum
phases and dependence on $b$ gives winding phases on the source code.

\begin{table}[htbp]
\centering\small
\renewcommand{\arraystretch}{1.08}
\begin{tabular}{@{}ccclc@{}}
\toprule
Source & Target & $G$ & Generators $v_j$ & Types\\
\midrule
$\mathrm F_0$ & $\mathrm F_1$ & $\Z_4$ & $1300$ & W\\
$\mathrm F_0$ & $\mathrm F_2$ & $\Z_4$ & $2100$ & W\\
$\mathrm F_0$ & $\mathrm M_0$ & $\Z_2$ & $2000$ & W\\
$\mathrm F_0$ & $\mathrm M_1$ & $\Z_2$ & $2200$ & W\\
$\mathrm F_0$ & $\mathrm M_2$ & $\Z_4\times\Z_2$ & $1113,\ 0202$ & W,W\\
$\mathrm F_0$ & $\mathrm T$ & $\Z_2^2$ & $0200,\ 2000$ & W,W\\
$\mathrm F_1$ & $\mathrm F_2$ & $\Z_4^2$ & $0012,\ 2100$ & P,W\\
$\mathrm F_1$ & $\mathrm M_0$ & $\Z_4\times\Z_2$ & $0001,\ 2000$ & P,W\\
$\mathrm F_1$ & $\mathrm M_1$ & $\Z_2$ & $0002$ & P\\
$\mathrm F_1$ & $\mathrm M_2$ & $\Z_2$ & $0202$ & M\\
$\mathrm F_1$ & $\mathrm T$ & $\Z_2^2$ & $0002,\ 0200$ & P,W\\
$\mathrm F_2$ & $\mathrm M_0$ & $\Z_4\times\Z_2$ & $0001,\ 2000$ & P,W\\
$\mathrm F_2$ & $\mathrm M_1$ & $\Z_4\times\Z_2$ & $0011,\ 2200$ & P,W\\
$\mathrm F_2$ & $\mathrm M_2$ & $\Z_4\times\Z_2$ & $1113,\ 0022$ & M,P\\
$\mathrm F_2$ & $\mathrm T$ & $\Z_2^2$ & $0002,\ 2000$ & P,W\\
$\mathrm M_0$ & $\mathrm M_1$ & $\Z_2^2$ & $0011,\ 2200$ & P,W\\
$\mathrm M_0$ & $\mathrm M_2$ & $\Z_4$ & $1113$ & M\\
$\mathrm M_0$ & $\mathrm T$ & $\Z_2$ & $0200$ & W\\
$\mathrm M_1$ & $\mathrm M_2$ & $\Z_2^2$ & $0202,\ 1113$ & W,W\\
$\mathrm M_1$ & $\mathrm T$ & $\Z_2$ & $0200$ & W\\
$\mathrm M_2$ & $\mathrm T$ & $\Z_2$ & $0002$ & W\\

\bottomrule
\end{tabular}
\caption{Shift orbifolds between the fixed labeled representatives.
The target code fixes the twisted-sector projection.}
\label{tab:orbifolds}
\end{table}

\begin{samepage}
For example, $\mathrm F_1\to\mathrm M_2$ has generator $0202$ and
character $(-1)^{a_2+b_2}$, a mixed shift in the indicated product
frame. For $\mathrm F_0\to\mathrm M_2$, the source has $a=0$; the
generators $1113,0202$ act by $i^{3(b_1+b_2)}$ and $(-1)^{b_2}$.
Both are winding phases; the target lifts specify their twisted-sector
extensions.\par
\end{samepage}

For each row, put $I=\C\cap\C'$. The invariant factors follow from
\[
 |G|=\frac{16}{|I|},\qquad
 2G\simeq(2\C'+I)/I,\qquad
 \left.\begin{aligned}|2G|&=2^a\\|G|&=2^{2a+b}\end{aligned}\right\}
 \Longrightarrow G\simeq\Z_4^a\oplus\Z_2^b.
\]
The listed words generate $G$ with these orders modulo $I$; their
characters have common kernel $I$ on $\C$.

Independent block-monomial changes of $\C$ and $\C'$ can change $I$.
For example, with $\C'=\mathrm F_1$,
\[
\begin{array}{c|c|c}
\C&|\C\cap\C'|&G\\\hline
\mathrm F_0&4&\Z_4\\
\langle1000,0001\rangle\sim\mathrm F_0&1&\Z_4^2
\end{array}
\]
Thus $G$ depends on the relative placement of the codes, not only on
their separate block-monomial classes.

\begin{remark}[The choice of twisted-sector phases]\label{rem:torsionexample}
For $\mathrm M_1\to\mathrm M_2$, $G=\Z_2^2$ and
$I=\langle2200,0022\rangle$. With $g=(1,1,1)$, choose
\[
 r_1=s_1=(0,g),\qquad
 r_2=(g/2,g/2),\qquad s_2=(g/2,-g/2)
\]
in $W^2$. The target shifts $t_i=(r_i,s_i)$ have labels $0202,1113$.
Since $\lambda_i=(0,s_i)\in\Lam_{\mathrm M_1}$, the pure winding
shifts $\widetilde t_i=(r_i,0)$ give the same untwisted characters.
In the sector shifted by $\beta_1\widetilde t_1+\beta_2\widetilde t_2$,
replacing the insertion $\sum_i\alpha_i\widetilde t_i$ by
$\sum_i\alpha_i t_i$ multiplies its action by
\[
 \varepsilon(\alpha,\beta)
 =\exp\!\left(2\pi i\sum_{i,j}\alpha_i\beta_j s_i\cdot r_j\right)
 =(-1)^{\alpha_1\beta_2+\alpha_2\beta_1},
 \qquad \alpha,\beta\in\Z_2^2.
\]
This is the nontrivial discrete-torsion phase for $\Z_2^2$.
Without it, the pure shifts give the code
\[
 I+\langle0200,1100\rangle
 =\langle02,11\rangle\times\langle22\rangle,
\]
which is equivalent to $\mathrm M_1$ by exchanging the two null
coordinate groups. With this phase the target is $\mathrm M_2$, by
Proposition~\ref{prop:orbifold}. The two extensions of the same winding action therefore give
distinct target backgrounds.
\end{remark}

\subsection{The orbifold from \texorpdfstring{$\mathrm T$ to $\mathrm M_2$}{T to M2}}
For $\mathrm T$ the momentum lattice is $\Z^{12}$ with its standard
positive form. Equations~\eqref{eq:D12extension}--\eqref{eq:M2positive}
give
\[
 \Lam_{\mathrm T}\cap\Lam_{\mathrm M_2}=D_{12},\qquad
 \Lam_{\mathrm M_2}=D_{12}\cup(s+D_{12}).
\]
Here $2s\in D_{12}$, $\eta(s,s)=2$, and $s\notin\Z^{12}$.
The order-two symmetry on the $\mathrm T$ theory is
\begin{equation}\label{eq:Tparity}
 \chi_s(z)=e^{2\pi i\eta(s,z)}
 =(-1)^{\sum_jz_j}=(-1)^{\HH(z,z)}=(-1)^{2\Delta(z)}.
\end{equation}
These identities follow from the coordinates of $s$ modulo $\Z$ and
$z_j^2\equiv z_j\pmod2$. They extend to oscillator descendants since
$N+\bar N\in\Z$. Multiplicativity of $(-1)^{2\Delta}$ here follows
from its equality with the lattice character $\chi_s$.

The projected sectors are $D_{12}$ and $s+D_{12}$, whose union is
$\Lam_{\mathrm M_2}$. In the $(G,B)=(I_6,0)$ frame of $\mathrm T$,
\eqref{eq:Tparity} is a mixed momentum and winding shift. The inverse quantum symmetry is
the last row of Table~\ref{tab:orbifolds}: $v=0002$ acts on
$\mathrm M_2$ by $(-1)^{a_2}$. In its displayed frame it is
$(-1)^{\ell_1}$, a winding phase, as follows from
\eqref{eq:shiftcharges} with $(r,s)=(0,e_6)$,
$\gamma=2H_M^{-T}$ and $-B_Me_6=\tfrac12e_1$.

\end{document}